\pdfoutput=1
%-----------------------------------------------------------------------------
%
%               Template for sigplanconf LaTeX Class
%
% Name:         sigplanconf-template.tex
%	
% Purpose:      A template for sigplanconf.cls, which is a LaTeX 2e class
%               file for SIGPLAN conference proceedings.
%
% Guide:        Refer to "Author's Guide to the ACM SIGPLAN Class,"
%               sigplanconf-guide.pdf
%
% Author:       Paul C. Anagnostopoulos
%               Windfall Software
%               978 371-2316
%               paul@windfall.com
%
% Created:      15 February 2005Eq.
%
%-----------------------------------------------------------------------------

%\documentclass[acmtkdd]{acmsmall}
\documentclass{acmsmall}

\pdfoutput=1 
\usepackage[varg]{txfonts}

\let\subcaption\relax

% The following \documentclass options may be useful:

% preprint      Remove this option only once the paper is in final form.
% 10pt          To set in 10-point type instead of 9-point.
% 11pt          To set in 11-point type instead of 9-point.
% authoryear    To obtain author/year citation style instead of numeric.

\usepackage[bookmarks=false]{hyperref}
\usepackage{amsmath}
\usepackage{algorithm,algpseudocode}
\usepackage{amsfonts}
\usepackage{booktabs}
\usepackage{graphicx}
\usepackage{color}

\usepackage{url}
\usepackage{paralist}
\usepackage{balance}
\usepackage{ramkidefns}
\usepackage{subcaption}

\usepackage{tikz}
\usepackage{pgfplots}
\usepackage{pgfplotstable}
\usepackage{etoolbox}
\usetikzlibrary{patterns}
%for displaying topic words
\usepackage{longtable}
\usepackage{colortbl} 

\graphicspath{{./figs/}}

\newcommand{\ignore}[1]{}
\newcommand{\nnz}{\text{nnz}}
\newcommand{\NaiveAlg}{Naive-Parallel-AUNMF}
\newcommand{\ParNMF}{MPI-FAUN}

\newcommand{\Naive}{\textbf{Naive}}
\newcommand{\MU}{\textbf{MU}}
\newcommand{\HALS}{\textbf{HALS}}
\newcommand{\BPP}{\textbf{ABPP}}
\newcommand{\LUC}{\textbf{LUC}}

% Reducing space between bibitems
\let\OLDthebibliography\thebibliography
\renewcommand\thebibliography[1]{
  \OLDthebibliography{#1}
  \setlength{\parskip}{0pt}
  \setlength{\itemsep}{0.8 \itemsep}
}

% To fix something weird with algorithm environment.  See:
% http://tex.stackexchange.com/questions/121958/sigplanconf-algorithm2e-missing-algorithm-when-figure
%\renewcommand{\topfraction}{0.45}

  \clubpenalty = 10000
  \widowpenalty = 10000
  \displaywidowpenalty = 10000
  
%remove the footer for arxiv submission.
%\usepackage{fancyhdr}
\pagestyle{plain}
%\fancyfoot{}

%remove acm footer in first page
\makeatletter \def\runningfoot{\def\@runningfoot{}} \def\firstfoot{\def\@firstfoot{}} \makeatother 

\acmArticle{}
  
\begin{document}

\setlength{\pdfpageheight}{\paperheight}
\setlength{\pdfpagewidth}{\paperwidth}

% Uncomment one of the following two, if you are not going for the 
% traditional copyright transfer agreement.

%\exclusivelicense                % ACM gets exclusive license to publish, 
                                  % you retain copyright

%\permissiontopublish             % ACM gets nonexclusive license to publish
                                  % (paid open-access papers, 
                                  % short abstracts)

\title{MPI-FAUN: An MPI-Based Framework for Alternating-Updating Nonnegative Matrix Factorization}
%\subtitle{Subtitle Text, if any}

\author{Ramakrishnan Kannan
\affil{Oak Ridge National Laboratories, TN}
Grey Ballard
\affil{Wake Forest University, NC}
Haesun Park
\affil{Georgia Institute of Technology, GA}
}

\begin{abstract}
Non-negative matrix factorization (NMF) is the problem of determining two non-negative low rank factors $\WW$ and $\HH$, for the given input matrix $\AA$, such that $\AA \approx \WW \HH$.  
NMF is a useful tool for many applications in different domains such as topic modeling in text mining, background separation in video analysis, and community detection in social networks.  
Despite its popularity in the data mining community, there is a lack of efficient parallel algorithms to solve the problem for big data sets.  

%We propose a high-performance distributed-memory parallel algorithm that computes the factorization by iteratively solving alternating non-negative least squares (NLS) subproblems for $\WW$ and $\HH$.  
%As opposed to previous implementations, our algorithm is also flexible: (1) it performs well for both dense and sparse matrices, and (2) it allows the user to choose any one of the multiple algorithms for solving the updates to low rank factors $\WW$ and $\HH$ within the alternating iterations. 
The main contribution of this work is a new, high-performance parallel computational framework for a broad class of NMF algorithms that iteratively solves alternating non-negative least squares (NLS) subproblems for $\WW$ and $\HH$. 
It maintains the data and factor matrices in memory (distributed across processors), uses MPI for interprocessor communication, and, in the dense case, provably minimizes communication costs (under mild assumptions).  
The framework is flexible and able to leverage a variety of NMF and NLS algorithms, including Multiplicative Update, Hierarchical Alternating Least Squares, and Block Principal Pivoting.
Our implementation allows us to benchmark and compare different algorithms on massive dense and sparse data matrices of size that spans for few hundreds of millions to billions.  
We demonstrate the scalability of our algorithm and compare it with baseline implementations, showing significant performance improvements. 
%Based on our survey and knowledge, this paper presents the fastest distributed NMF algorithm. 
The code and the datasets used for conducting the experiments are available online.
% can be downloaded from \url{https://bitbucket.org/ramkikannan/nmflibrary}. 
\end{abstract}

%\category{}{}{}
%\category{}{}{}

% general terms are not compulsory anymore, 
% you may leave them out
%\terms
%term1, term2
%\terms{Algorithms, Performance}

%\keywords
%keywords go here, ...

%\begin{bottomstuff}
%This manuscript has been authored by UT-Battelle, LLC under
%Contract No. DE-AC05-00OR22725 with the U.S. Department of Energy.
%The United States Government retains and the publisher, by accepting
%the article for publication, acknowledges that the United States
%Government retains a non-exclusive, paid-up, irrevocable, world-wide
%license to publish or reproduce the published form of this manuscript,
%or allow others to do so, for United States Government purposes.
%The Department of Energy will provide public access to these results of
%federally sponsored research in accordance with the DOE Public Access
%Plan (\url{http://energy.gov/downloads/doe-public-access-plan}).
%\end{bottomstuff}

\maketitle

\section{Introduction}

Non-negative Matrix Factorization (NMF) is the problem of finding two low rank factors $\WW\in \Rnplus{m\times k}$ and $\HH\in \Rnplus{k \times n}$ for a given input matrix  $\AA\in \Rnplus{m\times n}$, such that $\AA \approx \WW \HH$.
Here, $\Rnplus{m\times n}$ denotes the set of $m \times n$ matrices with non-negative real values.
Formally, the NMF problem \cite{seung2001algorithms} can be defined as \SplitN{\label{eqn:original NMF}}{
\min_{\WW \geq 0,\HH \geq 0} & \|\AA-\WW\HH\|_F,
}
where $\|\M{X}\|_F=(\sum_{ij} x_{ij}^2)^{1/2}$ is the Frobenius norm.

NMF is widely used in data mining and machine learning as a dimension reduction and factor analysis method. 
It is a natural fit for many real world problems as the non-negativity is inherent in many representations of real-world data and
the resulting low rank factors are expected to have a natural interpretation. The applications of NMF range from text mining \cite{pauca2004text},  computer vision \cite{hoyer2004non}, and bioinformatics \cite{kim2007sparse} to blind source separation  \cite{cichocki2009nonnegative}, unsupervised clustering \cite{kuang2012symmetric,kuang2013symnmf}  and many other areas.
In the typical case, $k \ll \min(m,n)$; for problems today, $m$ and $n$ can be on the order of millions or more, and $k$ is on the order of few tens to thousands.

There is a vast literature
on algorithms for NMF and their convergence properties \cite{kim2013nonnegative}.   
The commonly adopted NMF algorithms are -- (i) Multiplicative Update (\MU) \cite{seung2001algorithms} (ii) Hierarchical Alternating Least Squares (\HALS) \cite{cichocki2009nonnegative,Ho2008} (iii) NMF based on Alternating Nonnegative Least Squares and Block Principal Pivoting (\BPP) \cite{kim2011fast}, and (iv) Stochastic Gradient Descent (SGD) Updates \cite{gemulla2011large}. 
Most of the algorithms in NMF literature are based on alternately optimizing each of the low rank factors $\WW$ and $\HH$ while keeping the other fixed, in which case each subproblem is a constrained convex optimization problem. 
Subproblems can then be solved using standard optimization techniques such as projected gradient or interior point method; a detailed survey for solving such problems can be found in \cite{xiong2013survey,kim2013nonnegative}. 
In this paper, our implementation uses either \BPP, \MU, or \HALS. 
But our parallel framework is extensible to other algorithms as-is or with a few modifications, as long as they fit an alternating-updating framework (defined in Section \ref{sec:aunmf}).  

With the advent of large scale internet data and interest in Big Data, researchers have started studying scalability of many foundational machine learning algorithms. 
To illustrate the dimension of matrices commonly used in the machine learning community, we present a few examples. 
Nowadays the adjacency matrix of a billion-node social network is common. 
In the matrix representation of a video data, every frame contains three matrices for each RGB color, which is reshaped into a column.  
Thus in the case of a 4K video, every frame will take approximately 27 million rows (4096 row pixels x 2196 column pixels x 3 colors). 
Similarly, the popular representation of documents in text mining is a bag-of-words matrix, where the rows are the dictionary and the columns are the documents (e.g., webpages). 
Each entry $A_{ij}$ in the bag-of-words matrix is generally the frequency count of the word $i$  in the document $j$. 
Typically with the explosion of the new terms in social media, the number of words spans to millions. 
To handle such high-dimensional matrices, it is important to study low-rank approximation methods in a data-distributed and parallel computing environment. 

In this work, we present an efficient algorithm and implementation using tools from the field of High-Performance Computing (HPC).
We maintain data in memory (distributed across processors), take advantage of optimized libraries like BLAS and LAPACK for local computational routines, and use the Message Passing Interface (MPI) standard to organize interprocessor communication.
%In particular, we use an MPI/C++ implementation, which offers several technical advantages such as (1) the ability to leverage state-of-the-art hardware, (2) efficient networks for communication between nodes, and (3) the availability of numerically stable and efficient BLAS and LAPACK routines. 
Furthermore, the current hardware trend is that available parallelism (and therefore aggregate computational rate) is increasing much more quickly than improvements in network bandwidth and latency, which implies that the relative cost of communication (compared to computation) is increasing.
To address this challenge, we analyze algorithms in terms of both their computation and communication costs.
In particular, we prove in Section \ref{sec:parNMF} that in the case of dense input and under a mild assumption, our proposed algorithm minimizes the amount of data communicated between processors to within a constant factor of the lower bound.

A key attribute of our framework is that the efficiency does not require a loss of generality of NMF algorithms.
Our central observation is that most NMF algorithms consist of two main tasks: (a) performing matrix multiplications and (b) solving Non-negative Least Squares (NLS) subproblems, either approximately or exactly.
More importantly, NMF algorithms tend to perform the same matrix multiplications, differing only in how they solve NLS subproblems, and the matrix multiplications often dominate the running time of the algorithms.
Our framework is designed to perform the matrix multiplications efficiently and organize the data so that the NLS subproblems can be solved independently in parallel, leveraging any of a number of possible methods.
We explore the overall efficiency of the framework and compare three different NMF methods in Section \ref{sec:experiment}, performing convergence, scalability, and parameter-tuning experiments on over 1500 processors.

\begin{table}[htp]
\begin{center}
\begin{tabular}{|c|c|c|c|c|}
\hline 
Dataset & Type & Matrix size & NMF Time \\ \hline
Video & Dense & 1 Million x 13,824 &  5.73 seconds \\
Stack Exchange & Sparse & 627,047 x 12 Million &  67 seconds \\
Webbase-2001 & Sparse & 118 Million x 118 Million & 25 minutes \\ \hline
\end{tabular}
\end{center}
\caption{\ParNMF\ on large real-world datasets. Reported time is for 30 iterations on 1536 processors with a low rank of 50.}
\label{tab:teaser}
\end{table}%

With our framework, we are able to explore several large-scale synthetic and real-world data sets, some dense and some sparse. In Table \ref{tab:teaser}, we present the NMF computation wall clock time on some very large real world datasets. We describe the results of the computation in Section \ref{sec:experiment}, showing the range of application of NMF and the ability of our framework to scale to large data sets.  

%For example, we apply NMF to a surveillance video data set of size 44 GB in order to isolate the moving objects from the background view; once loaded into memory, the most efficient algorithm requires only 5.73 seconds on 1536 processors.
%We also apply our techniques to sparse data, analyzing Stack Exchange posts (text data) to infer discussion topics; this data set is 44 GB of XML file which takes 6.1 GB of sparse matrix representation, and our NMF implementation requires 67 seconds on 1536 processors to obtain meaningful results 
%Similarly, our algorithm can perform graph clustering using NMF on a very large sparse graph with 118 million nodes and 1 billion edges and the sparse representation of this matrix takes 18GB on disk 
%in nearly 15 minutes using 1536 processors, 
%\grey{do we want to report that the output dense matrices required 88GB?}.
%\grey{break up into multiple sentences}

A preliminary version of this work has already appeared as a conference paper \cite{KBP16}.
While the focus of the previous work was parallel performance of \BPP, the goal of this paper is to explore more data analytic questions.
In particular, the new contributions of this paper include (1) implementing a software framework to compare \BPP\ with \MU\ and \HALS\ for large scale data sets, (2) benchmarking on a data analysis cluster and scaling up to over 1500 processors, and (3) providing an interpretation of results for real-world data sets.
We provide a detailed comparison with other related work, including MapReduce implementations of NMF, in Section \ref{sec:related}.

Our main contribution is a new, high-performance parallel computational framework for a broad class of NMF algorithms. 
The framework is efficient, scalable, flexible, and demonstrated to be effective for large-scale dense and sparse matrices.  
Based on our survey and knowledge, we are the fastest NMF implementation available in the literature.  
The code and the datasets used for conducting the experiments can be downloaded from \url{https://github.com/ramkikannan/nmflibrary}. 
\section{Preliminaries}

\subsection{Notation}
\label{sec:notations}

Table \ref{tab:notation} summarizes the notation we use throughout this paper.
We use \emph{upper case} letters for matrices and \emph{lower case} letters for vectors.
%For example, $\AA$ is a matrix and $\aa$ is a column vector and $\aa^T$ is a row vector. 
We use both subscripts and superscripts for sub-blocks of matrices. 
For example, $\AA_i$ is the $i$th row block of matrix $\AA$, and $\AA^i$ is the $i$th column block.
Likewise, $\aa_i$ is the $i$th row of $\AA$, and $\aa^i$ is the $i$th column.
We use $m$ and $n$ to denote the numbers of rows and columns of $\AA$, respectively, and we assume without loss of generality $m\geq n$ throughout.

\begin{table}%[htdp]
\begin{center}
\begin{tabular}{|l|l|}
\hline
$\AA$ & Input matrix \\
$\WW$ & Left low rank factor \\
$\HH$ & Right low rank factor \\
$m$ & Number of rows of input matrix \\
$n$ & Number of columns of input matrix \\
$k$ & Low rank \\
$\M{M}_i$ & $i$th row block of matrix $\M{M}$ \\
$\M{M}^i$ & $i$th column block of matrix $\M{M}$  \\
$\M{M}_{ij}$ & $(i,j)$th subblock of $\M{M}$ \\
$p$ & Number of parallel processes \\
$p_r$ & Number of rows in processor grid \\
$p_c$ & Number of columns in processor grid \\
\hline
\end{tabular}
\end{center}
\caption{Notation}
\label{tab:notation}
\end{table}%

\subsection{Communication model}
\label{sec:comm-model}

To analyze our algorithms, we use the $\alpha$-$\beta$-$\gamma$ model of distributed-memory parallel computation.
In this model, interprocessor communication occurs in the form of messages sent between two processors across a bidirectional link (we assume a fully connected network).
We model the cost of a message of size $n$ words as $\alpha+n\beta$, where $\alpha$ is the per-message latency cost and $\beta$ is the per-word bandwidth cost.
Each processor can compute floating point operations (flops) on data that resides in its local memory; $\gamma$ is the per-flop computation cost.
With this communication model, we can predict the performance of an algorithm in terms of the number of flops it performs as well as the number of words and messages it communicates.
For simplicity, we will ignore the possibilities of overlapping computation with communication in our analysis.
For more details on the $\alpha$-$\beta$-$\gamma$ model, see \cite{TRG05,CH+07}.

\subsection{MPI collectives}
\label{sec:collectives}

Point-to-point messages can be organized into collective communication operations that involve more than two processors.
MPI provides an interface to the most commonly used collectives like broadcast, reduce, and gather, as the algorithms for these collectives can be optimized for particular network topologies and processor characteristics.
The algorithms we consider use the all-gather, reduce-scatter, and all-reduce collectives, so we review them here, along with their costs.
Our analysis assumes optimal collective algorithms are used (see \cite{TRG05,CH+07}), though our implementation relies on the underlying MPI implementation.

At the start of an all-gather collective, each of $p$ processors owns data of size $n/p$. 
After the all-gather, each processor owns a copy of the entire data of size $n$. 
The cost of an all-gather is $\alpha\cdot \log p + \beta \cdot \frac{p-1}{p}n$.
At the start of a reduce-scatter collective, each processor owns data of size $n$.
After the reduce-scatter, each processor owns a subset of the sum over all data, which is of size $n/p$.
(Note that the reduction can be computed with other associative operators besides addition.)
The cost of an reduce-scatter is $\alpha\cdot \log p + (\beta+\gamma) \cdot \frac{p-1}{p}n$.
At the start of an all-reduce collective, each processor owns data of size $n$.
After the all-reduce, each processor owns a copy of the sum over all data, which is also of size $n$.
The cost of an all-reduce is $2\alpha\cdot \log p + (2\beta+\gamma) \cdot \frac{p-1}{p}n$.
Note that the costs of each of the collectives are zero when $p=1$.

\section{Related Work}\label{sec:related}

In the data mining and machine learning literature there is an overlap between low rank approximations and matrix factorizations due to the nature of applications. 
Despite its name, non-negative matrix ``factorization'' is really a low rank approximation. 
Recently there is a growing interest in collaborative filtering based 
recommender systems. One of the popular techniques
for collaborative filtering is matrix factorization, often with nonnegativity constraints, 
and its implementation is widely available in many
off-the-shelf distributed machine learning libraries
such as GraphLab \cite{low2012}, MLLib \cite{meng2015mllib},
and many others \cite{satish2014,yun2014} as well.
However, we would like to clarify that collaborative
filtering using matrix factorization is a different problem than NMF: 
in the case of collaborative filtering, non-nonzeros in the matrix 
are considered to be missing entries, while in the case of NMF, non-nonzeros in the matrix correspond to true zero values.
 
There are several recent distributed NMF algorithms in the literature \cite{liao2014cloudnmf,Faloutsos2014,Yin2014,liu2010distributed}. 
Liu et al.\ propose running Multiplicative Update (MU) for KL divergence, squared loss, and ``exponential'' loss functions \cite{liu2010distributed}. 
Matrix multiplication, element-wise multiplication, and element-wise division are the building blocks of the MU algorithm. 
The authors discuss performing these matrix operations effectively in Hadoop for sparse matrices. 
Using similar approaches, Liao et al.\ implement an open source Hadoop-based MU algorithm and study its scalability on large-scale biological data sets \cite{liao2014cloudnmf}. 
Also, Yin, Gao, and Zhang present a scalable NMF that can perform frequent updates, which aim to use the most recently updated data \cite{Yin2014}. 
Similarly Faloutsos et al.\ propose a distributed, scalable method for decomposing matrices, tensors, and coupled data sets through stochastic gradient descent on a variety of objective functions \cite{Faloutsos2014}. 
The authors also provide an implementation that can enforce non-negative constraints on the factor matrices. 
All of these works use Hadoop to implement their algorithms.

We emphasize that our MPI-based approach has several advantages over Hadoop-based approaches:
\begin{itemize}
	\item efficiency -- our approach maintains data in memory, never communicating the data matrix, while Hadoop-based approaches must read/write data to/from disk and involves global shuffles of data matrix entries;
	\item generality -- our approach is well-designed for both dense and sparse data matrices, whereas Hadoop-based approaches generally require sparse inputs;
	\item privacy -- our approach allows processors to collaborate on computing an approximation without ever sharing their local input data (important for applications involving sensitive data, such as electronic health records), while Hadoop requires the user to relinquish control of data placement.
\end{itemize}

We note that Spark \cite{ZCFSS10} is a popular big-data processing infrastructure that is generally more efficient for iterative algorithms such as NMF than Hadoop, as it maintains data in memory and avoids file system I/O.
Even with a Spark implementation of previously proposed Hadoop-based NMF algorithm, we expect performance to suffer from expensive communication of input matrix entries, and Spark will not overcome the shortcomings of generality and privacy of the previous algorithms.
Although Spark has collaborative filtering libraries such as MLlib \cite{meng2015mllib}, which use matrix factorization and can impose non-negativity constraints, none of them implement pure NMF, and so we do not have a direct comparison against NMF running on Spark.
As mentioned above, the problem of collaborative filtering is different from NMF, and therefore different computations are performed at each iteration.

Fairbanks et al. \cite{Fairbanks2015} present a parallel NMF algorithm designed for multicore machines.  
To demonstrate the importance of minimizing communication, we consider this approach to parallelizing an alternating-updating NMF algorithm in distributed memory (see Section \ref{sec:naive}).
While this naive algorithm exploits the natural parallelism available within the alternating iterations (the fact that rows of $\WW$ and columns of $\HH$ can be computed independently), it performs more communication than necessary to set up the independent problems.
We compare the performance of this algorithm with our proposed approach to demonstrate the importance of designing algorithms to minimize communication; that is, simply parallelizing the computation is not sufficient for satisfactory performance and parallel scalability.

Apart from distributed NMF algorithms using Hadoop and multicores, there are also implementations of the
MU algorithm in a distributed memory setting using X10 \cite{Grove2014} and on a GPU \cite{mejia2015nmf}.

\section{Alternating-Updating NMF Algorithms}
\label{sec:aunmf}

We define Alternating-Updating NMF algorithms as those that (1) alternate between updating $\WW$ for a given $\HH$ and updating $\HH$ for a given $\WW$ and (2) use the Gram matrix associated with the fixed factor matrix and the product of the input data matrix $\AA$ with the fixed factor matrix.
We show the structure of the framework in Algorithm \ref{alg:aunmf}. 

%Different algorithms determine $\WW$ and $\HH$ by partitioning these matrices in different ways and can be explained under Block Coordinate Descent (BCD) framework. 
%Generally, these matrices are determined as (a) 2-Blocks of entire matrix $\WW$ and $\HH$; (b) $2k$ blocks such as $[\ww^1, \ww^2, \cdots , \ww^m] \in \Rnplus{k}$ and $[\hh_1^T, \hh_2^T, \cdots , \hh_n^T] \in \Rnplus{k}$ \footnote{For convenience, $\M{x}^i$ is the $i$th column vector of matrix $\M{X}$ and $\M{x}_i$ is the $i$th row vector} and (c) $(m+n)k$ scalar blocks of $w_{ik} \in \WW$ and $h_{kj} \in \HH$. 

\begin{algorithm}
\caption{$[\WW,\HH] = \text{AU-NMF}(A,k)$}
\label{alg:aunmf}
\begin{algorithmic}[1]
\Require $\AA$ is an $m\times n$ matrix, $k$ is rank of approximation
\State Initialize $\HH$ with a non-negative matrix in $\Rn{n\times k}_+$.
\While{stopping criteria not satisfied} \label{algo:nmfloop}
  \State Update $\WW$ using $\HH \HH^T$ and $\AA \HH^T$
  %\State Update $\WW$ as $\Argmin{\tilde \WW\geq 0}\NormBr{\AA-\tilde\WW\HH}_F$
  %\State Update all blocks defined over $\WW$
  \label{line:aunmf:W}
  \State Update $\HH$ using $\WW^T\WW$ and $\WW^T \AA$
  %\State Update $\HH$ as $\Argmin{\tilde\HH\geq 0}\NormBr{\AA-\WW\tilde\HH}_F$
  %\State Update all blocks defined over $\HH$
  \label{line:aunmf:H}
\EndWhile
\end{algorithmic}
\end{algorithm}

The specifics of lines \ref{line:aunmf:W} and  \ref{line:aunmf:H} depend on the NMF algorithm, and we refer to the computation associated with these lines as the Local Update Computations (\LUC), as they will not affect the parallelization schemes we define in Section \ref{sec:parNMF}.
Because these computations are performed locally, we use a function $F(m,n,k)$ to denote the number of flops required for each algorithm's \LUC\ (and we do not consider communication costs).

We note that AU-NMF is very similar to a two-block, block coordinate descent (BCD) framework, but it has a key difference.
In the BCD framework where the two blocks are
the unknown factors $\WW$ and $\HH$, 
we \emph{solve} the following subproblems,
which have a unique solution for a full rank $\HH$ and $\WW$: 
\SplitN{\label{eqn:two block}} {
\WW &\leftarrow \Argmin{\tilde \WW\geq 0}\NormBr{\AA-\tilde\WW\HH}_F,\\
\HH &\leftarrow  \Argmin{\tilde\HH\geq 0}\NormBr{\AA-\WW\tilde\HH}_F.
}
Since each subproblem involves nonnegative least squares,
this two-block BCD method is also called
the Alternating Non-negative Least Squares (ANLS) method \cite{kim2013nonnegative}.
For example, Block Principal Pivoting (\BPP), discussed more in detail at Section \ref{sec:BPP}, is 
one algorithm that solves these NLS subproblems.
In the context of the AU-NMF algorithm,
 an ANLS method {\em maximally} reduces 
the overall NMF objective function value 
by finding the optimal solution for
 given $\HH$ and $\WW$ in lines \ref{line:aunmf:W} 
and \ref{line:aunmf:H} respectively.  

There are other popular NMF algorithms 
that update the factor matrices alternatively
without maximally reducing the objective function value each time,
in the same sense as in ANLS. 
These updates do not necessarily solve each of the subproblems \eqref{eqn:two block} to optimality but simply improve the overall objective function \eqref{eqn:original NMF}.  
Such methods include Multiplicative Update (\MU) \cite{seung2001algorithms} and Hierarchical Alternating Least Squares (\HALS) \cite{cichocki2009nonnegative}, which was also proposed as Rank-one Residual Iteration (RRI) \cite{Ho2008}.
To show how these methods can fit into the AU-NMF framework, we discuss them in more detail in Sections \ref{sec:MU} and \ref{sec:HALS}.

The convergence properties of these different algorithms are discussed in detail by Kim, He and Park \cite{kim2013nonnegative}. 
We emphasize here that both \MU\ and \HALS\ require computing Gram matrices and matrix products of the input matrix and each factor matrix.
Therefore, if the update ordering follows the convention of updating all of $\WW$ followed by all of $\HH$, both methods fit into the AU-NMF framework. 
We note that both \MU\ and \HALS\ are defined for more general update orders, but for our purposes we constrain them to be AU-NMF algorithms.

While we focus on three NMF algorithms in this paper, we highlight that our framework is extensible to other NMF algorithms, including those based on Alternating Direction Method of Multipliers (ADMM) \cite{SF14}, Nesterov-based methods \cite{GTLY12}, or any other method that fits the framework of Algorithm \ref{alg:aunmf}.

\subsection{Multiplicative Update (\MU)}
\label{sec:MU}

In the case of \MU\ \cite{seung2001algorithms}, individual entries of $\WW$ and $\HH$ are updated with all other entries fixed.
In this case, the update rules are 
\SplitN{\label{eqn:muupdate}} {
w_{ij} &\leftarrow w_{ij} \frac{(\AA \HH^T)_{ij}}{(\WW \HH \HH^T)_{ij}}, \text{ and }\\
h_{ij} &\leftarrow  h_{ij} \frac{(\WW^T \AA)_{ij}}{(\WW^T \WW \HH)_{ij}}.
} 
Instead of performing these $(m+n)k$ in an arbitrary order, if all of $\WW$ is updated before $\HH$ (or vice-versa), this method also follows the AU-NMF framework.
After computing the Gram matrices $\HH\HH^T$ and $\WW^T \WW$ and the products $\AA\HH^T$ and $\WW^T\AA$, the extra cost of computing $\WW (\HH\HH^T)$ and $(\WW^T\WW)\HH$ is $F(m,n,k)=2(m+n)k^2$ flops to perform updates for all entries of $\WW$ and $\HH$, as the other elementwise operations affect only lower-order terms.
Thus, when \MU\ is used, lines \ref{line:aunmf:W} and \ref{line:aunmf:H} in Algorithm \ref{alg:aunmf} -- and functions UpdateW and UpdateH in Algorithms \ref{alg:naive} and \ref{alg:2D} -- implement the expressions in \eqref{eqn:muupdate}, given the previously computed matrices.

\subsection{Hierarchical Alternating Least Squares (\HALS)}
\label{sec:HALS}

In the case of \HALS\ \cite{cichocki2009nonnegative,CA2009}, updates are performed on individual columns of $\WW$ and rows of $\HH$ with all other entries in the factor matrices fixed.
This approach is a BCD method with $2k$ blocks, set to minimize the function
\begin{equation}
f(\ww^1,\cdots,\ww^k,\hh_1,\cdots,\hh_k)=\lt\|\AA-\sum_{i=1}^k \ww^i \hh_i \rt\|_F,\label{eq:hals_obj}
\end{equation}
where $\ww^i$ is the $i$th column of $\WW$ and $\hh_i$ is the $i$th row of $\HH$.
The update rules \cite[Algorithm 2]{CA2009} can be written in closed form:
\SplitN{\label{eqn:halsupdate}} {
\ww^i &\leftarrow \lt[ \ww^i + (\AA\HH^T)^i - \WW (\HH \HH^T)^i \rt]_+ \\
\ww^i &\leftarrow \frac{\ww^i}{\|\ww^i\|}, \text{ and } \\
\hh_i &\leftarrow \lt[ \hh_i + (\WW^T\AA)_i - (\WW^T \WW)_i\HH \rt]_+.
} 

%\grey{We need to explain this normalization, citing a HALS paper or something similar.}
Note that the columns of $\WW$ and rows of $\HH$ are updated in order, so that the most up-to-date values are always used, and these $2k$ updates can be done in an arbitrary order.  However, if all the $\WW$ updates are done before $\HH$ (or vice-versa),  the method falls into the AU-NMF framework.
After computing the matrices $\HH\HH^T$, $\AA\HH^T$, $\WW^T\WW$, and $\WW^T\AA$, the extra computation is $F(m,n,k)=2(m+n)k^2$ flops for updating both $\WW$ and $\HH$. 

Thus, when \HALS\ is used, lines \ref{line:aunmf:W} and \ref{line:aunmf:H} in Algorithm \ref{alg:aunmf} -- and functions UpdateW and UpdateH in Algorithms \ref{alg:naive} and \ref{alg:2D} -- implement the expressions in \eqref{eqn:halsupdate}, given the previously computed matrices.  

%Note that the normalization of columns of $\WW$ is unique to $\; this does introduce extra 
%In practice, we also found without such normalization, the error was oscillating and it promised the monotonic error reduction to \HALS\ algorithm.

%According to \cite{kim2013nonnegative}, the above update function of MU does not find the optimal blocks $w_{ij}$ and discuss a different optimal update for scalar blocks function that is similar to HALS. 

%In general, many of the NMF algorithms can be explained using the BCD framework. However, specifically it can be observed in all the above three different cases of NMF algorithms explained above, the blocks of the matrix $\WW$ is updated first followed by the updates to matrix $\HH$ and we always use the most recent blocks of $\WW$ and $\HH$ matrix. In this paper, we define iteration as one pass of updating all the blocks of low rank factors $\WW$ and $\HH$. That is., in the case of 2-Blocks, updating entire $\WW$, $\HH$; for $2k$-blocks finding all the vectors of $\WW$, $\HH$; and all the elements of $\WW$, $\HH$ for scalar blocks BCD. 

\subsection{Alternating Nonnegative Least Squares with Block Principal Pivoting}
\label{sec:BPP}

%In this paper, we focus on and use the BPP method \cite{kim2011fast} to solve the NLS problem, as it is the fastest algorithm (in terms of number of iterations). 
%As argued in Section \ref{sec:aunmf}, we note that many NMF algorithms, including MU and HALS, can be used within our parallel frameworks (Algorithms \ref{alg:naive} and \ref{alg:2D}).  

Block Principal Pivoting (BPP) is an active-set-like method for solving the NLS subproblems in Eq. \eqref{eqn:two block}.
The main subroutine of BPP is the single right-hand side NLS problem
\SplitN{\label{eqn:single NLS}}{
\min_{\xx\geq 0} \|\CC\xx-\mathbf{b}\|_2.
}

The Karush-Kuhn-Tucker (KKT) optimality conditions for  Eq.~\eqref{eqn:single NLS} are as follows
\SplitS{\label{eqn:KKT}}{
\yy &= \CC^T \CC \xx - \CC^T \mathbf{b} \\
\yy &\geq 0\\
\xx &\geq 0\\
x_i y_i & = 0 \;\; \forall i .
}
The KKT conditions \eqref{eqn:KKT} states that at optimality, the support sets (i.e., the non-zero elements)
of $\xx$ and $\yy$ are complementary to each other. Therefore, Eq.~\eqref{eqn:KKT} is an instance of
the \emph{Linear Complementarity Problem} (LCP) which arises frequently in quadratic programming.
When $k\ll\min(m,n)$, active-set and active-set-like methods are very suitable because most
computations involve matrices of sizes $m\times k, n\times k$, and $k\times k$ which are
small and easy to handle.

If we knew which indices correspond to nonzero values in the optimal solution, then computing the solution is an unconstrained least squares problem on these indices.
In the optimal solution, call the set of indices $i$ such that $x_i=0$ the active set, and let the remaining indices be the passive set. The BPP algorithm works to find this final active set and passive set. 
%Since the above problem is convex, the correct partition of the optimal solution will satisfy the KKT condition (Eq.~\eqref{eqn:KKT}).  
It greedily swaps indices between the intermediate active and passive sets until finding a partition that satisfies the KKT condition. In the partition of the optimal solution, the values of the indices that belong to the active set will take zero. The values of the indices that belong to the passive set are determined by solving the unconstrained least squares problem restricted to the passive set. Kim, He and Park \cite{kim2011fast}, discuss the BPP algorithm in further detail. 
We use the notation
$$\XX \gets \text{SolveBPP}(\CC^T\CC,\CC^T\BB)$$
to define the (local) function for using BPP to solve Eq.~\eqref{eqn:single NLS} for every column of $\XX$.
We define $C_\text{BPP}(k,c)$ as the cost of \text{SolveBPP}, given the $k\times k$ matrix $\CC^T\CC$ and $k\times c$ matrix $\CC^T\BB$. 
\text{SolveBPP} mainly involves solving least squares problems over the intermediate passive sets. 
%and has a worst case of solving $k$ least squares problems. 
Our implementation uses the normal equations to solve the unconstrained least squares problems because the normal equations matrices have been pre-computed in order to check the KKT condition.
However, more numerically stable methods such as QR decomposition can also be used.

Thus, when \BPP\ is used, lines \ref{line:aunmf:W} and \ref{line:aunmf:H} in Algorithm \ref{alg:aunmf} -- and functions UpdateW and UpdateH in Algorithms \ref{alg:naive} and \ref{alg:2D} -- correspond to calls to SolveBPP.
The number of flops involved in SolveBPP is not a closed form expression; in this case $F(m,n,k)=C_\text{BPP}(k,m)+ C_\text{BPP}(k,n)$.

\section{Parallel Algorithms}
\label{sec:parallel}

\subsection{Naive Parallel NMF Algorithm} 
\label{sec:naive}

In this section we present a naive parallelization of NMF algorithms, which has previously appeared in the context of a shared-memory parallel platform \cite{Fairbanks2015}. 
Each NLS problem with multiple right-hand sides can be parallelized on the observation that the problems for multiple right-hand sides are independent from each other. 
For example, we can solve several instances of Eq.~\eqref{eqn:single NLS} independently for different $\mathbf{b}$ where $\CC$ is fixed, which implies that we can optimize row blocks of $\WW$ and column blocks of $\HH$ in parallel. 
%\ramki{In this paper  we are using BPP to solve this Non-negative Least Squares (NLS) problem and call the algorithm as Naive Parallel BPP(\NaiveAlg). But the conversation in this section is not limited to \NaiveAlg and with out loss of generality is applicable to any algorithm that follows ANLS Algorithm \ref{alg:anlsnmf} such as Multiplicative Update (MU) and HALS as well.}

\begin{algorithm}[t!]
\caption{$[\WW,\HH] = \text{\NaiveAlg}(\AA,k)$}
\label{alg:naive}
\begin{algorithmic}[1]
\Require $\AA$ is an $m\times n$ matrix distributed both row-wise and column-wise across $p$ processors, $k$ is rank of approximation
\Require Local matrices: $\AA_{i}$ is $m/p\times n$, $\AA^{i}$ is $m\times n/p$, $\WW_i$ is $m/p\times k$, $\HH^i$ is $k\times n/p$
\State $p_i$ initializes $\HH^i$
\While{stopping criteria not satisfied}
	\Statex \quad\; \textbf{/* Compute $\WW$ given $\HH$ */} 
	\State collect $\HH$ on each processor using all-gather
		\label{line:naive:allgatherH}
	%\Statex \quad\; \textbf{/* Use Equation \eqref{eqn:muupdate}, \eqref{eqn:halsupdate} for \MU\ and \HALS\ respectively. ANLS/BPP will implement the SolveBPP function \ref{sec:BPP}.*/} 	
	\State $p_i$ computes $\WW_i \gets \text{updateW}(\HH\HH^T,\AA_i\HH^T)$
		\label{line:naive:computeW}
	\Statex \quad\; \textbf{/* Compute $\HH$ given $\WW$ */} 
	\State collect $\WW$ on each processor using all-gather
		\label{line:naive:allgatherW}
	%\Statex \quad\; \textbf{/* Use Equation \eqref{eqn:muupdate},\eqref{eqn:halsupdate} for \MU\ and \HALS\ respectively. ANLS/BPP will implement the SolveBPP function \ref{sec:BPP}.*/} 	
	\State $p_i$ computes $(\HH^i)^T \gets \text{updateH}(\WW^T\WW,(\WW^T\AA^i)^T)$
		\label{line:naive:computeH}
\EndWhile
\Ensure $\displaystyle \WW, \HH \approx \Argmin{\M{\tilde W} \geq 0, \M{\tilde H} \geq 0} \|\AA- \M{\tilde W} \M{\tilde H}\|$
\Ensure $\WW$ is an $m\times k$ matrix distributed row-wise across processors, $\HH$ is a $k\times n$ matrix distributed column-wise across processors
\end{algorithmic}
\end{algorithm}

\begin{figure}[t!]
\centering
\includegraphics[width=\textwidth]{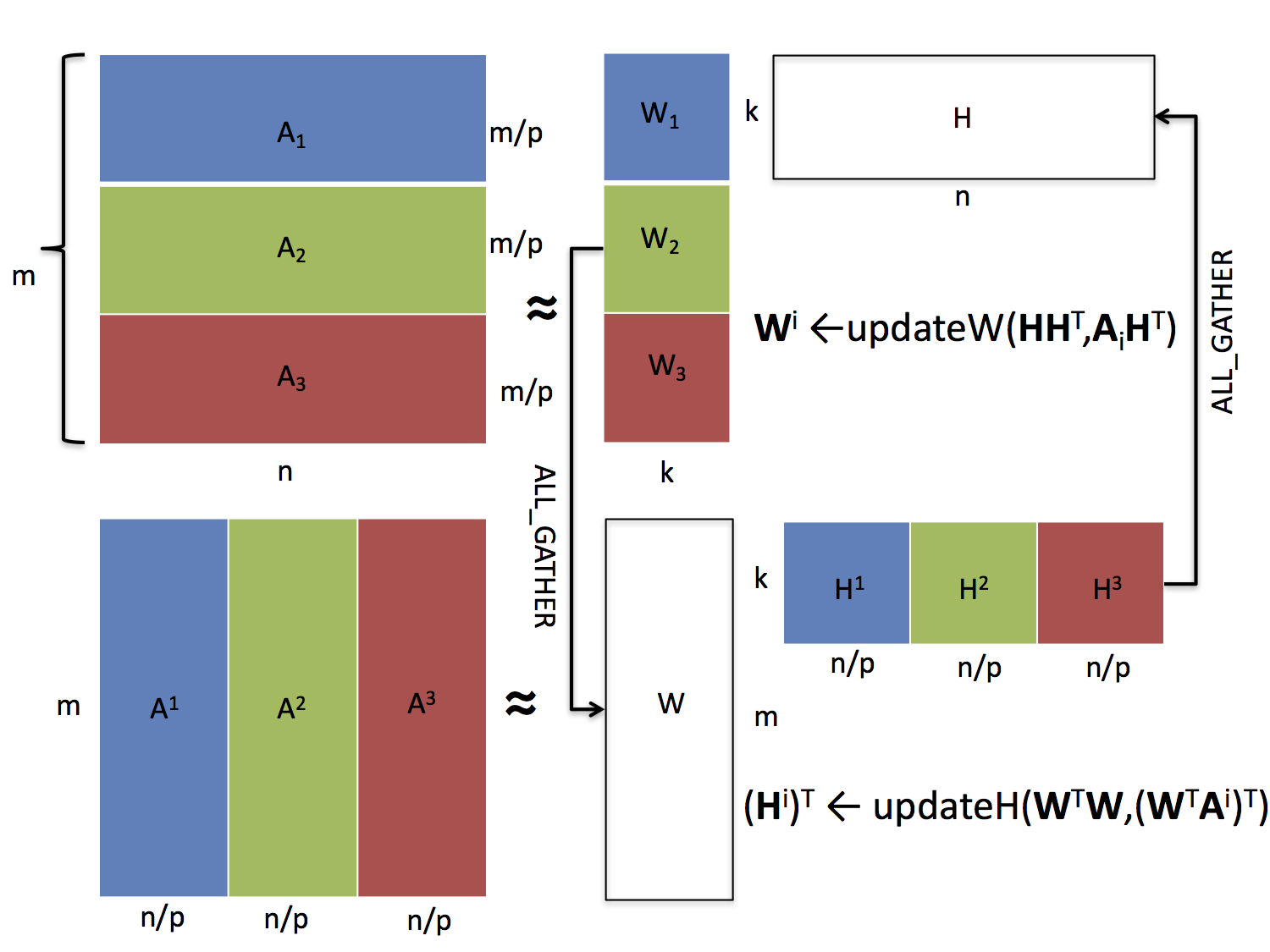}
\caption{\NaiveAlg. Note that both rows and columns of $A$ are 1D distributed.  The algorithm works by iteratively (all-)gathering the entire fixed factor matrix to each processor and then performing the Local Update Computations to update the variable factor matrix.} 
\label{fig:naive}
\end{figure}

\renewcommand{\arraystretch}{1.25}
\newcommand{\smaller}{\scriptsize}
\begin{table*}[t!]
\begin{center}
\begin{tabular}{|c|c|c|c|c|}
\hline
\textbf{Algorithm} & \textbf{Flops} & \textbf{Words} & \textbf{Messages} & \textbf{Memory}  \\ \hline
\smaller \NaiveAlg & \smaller $4\frac{mnk}{p}+(m{+}n)k^2+F\lt(\frac mp,\frac np,k\rt)$ & \smaller $O((m+n)k)$ & \smaller $O(\log p)^*$ & \smaller $O\lt(\frac{mn}{p}+(m{+}n)k\rt)$ \\ \hline
\smaller \ParNMF\ ($m/p \geq n$) & \smaller $4\frac{mnk}{p}+\frac{(m+n)k^2}{p} + F\lt(\frac mp,\frac np,k\rt)$ & \smaller $O(nk)$ & \smaller $O(\log p)^*$ & \smaller $O\lt(\frac{mn}{p}+\frac{mk}{p}+nk\rt)$ \\ \hline
\smaller \ParNMF\ ($m/p < n$) & \smaller $4\frac{mnk}{p}+\frac{(m+n)k^2}{p}+F\lt(\frac mp,\frac np,k\rt)$ & \smaller $O\lt( \sqrt{\frac{mnk^2}{p}}\rt)$ & \smaller $O(\log p)^*$ & \smaller $O\lt(\frac{mn}{p}+\sqrt{\frac{mnk^2}{p}}\rt)$ \\ \hline
\smaller Lower Bound & $-$ & \smaller $\Omega\lt(\min\lt\{\sqrt{\frac{mnk^2}{p}},nk\rt\}\rt)$ & \smaller $\Omega(\log p)$ & \smaller $\frac{mn}{p}+\frac{(m+n)k}{p}$ \\ \hline
\end{tabular}
\normalsize
\end{center}
\caption{Leading order algorithmic costs for \NaiveAlg{} and \ParNMF{} (per iteration).  Note that the computation and memory costs assume the data matrix $\AA$ is dense, but the communication costs (words and messages) apply to both dense and sparse cases.
The function $F(\cdot)$ denotes the number of flops required for the particular NMF algorithm's Local Update Computation, aside from the matrix multiplications common across AU-NMF algorithms. \\
$^*$The stated latency cost assumes no communication is required in \LUC; \HALS\ requires $k\log p$ messages for normalization steps.}
\label{tab:costs}
\end{table*}%

Algorithm \ref{alg:naive} and Figure \ref{fig:naive} present a straightforward approach to setting up the independent subproblems.
Let us divide $\WW$ into row blocks $\WW_1, \ldots, \WW_p$ and $\HH$ into column blocks $\HH^1, \ldots, \HH^p$. 
We then double-partition the data matrix $\AA$ accordingly into row blocks $\AA_{1}, \ldots, \AA_p$ and column blocks $\AA^1, \ldots, \AA^p$ so that processor $i$ owns both $\AA_i$ and $\AA^i$ (see Figure \ref{fig:naive}).
With these partitions of the data and the variables, one can implement any AU-NMF algorithm in parallel, with only one communication step for each solve.

We summarize the algorithmic costs of Algorithm \ref{alg:naive} (derived in the following subsections) in Table \ref{tab:costs}.
This naive algorithm \cite{Fairbanks2015} has three main drawbacks: (1) it requires storing two copies of the data matrix (one in row distribution and one in column distribution) and both full factor matrices locally, (2) it does not parallelize the computation of $\HH\HH^T$ and $\WW^T\WW$ (each processor computes it redundantly), and (3) as we will see in Section \ref{sec:parNMF}, it communicates more data than necessary.

\subsubsection{Computation Cost}

The computation cost of Algorithm \ref{alg:naive} depends on the particular NMF algorithm used.
Thus, the computation at line \ref{line:naive:computeW} consists of computing $\AA^i\HH^T$, $\HH\HH^T$, and performing the algorithm-specific Local Update Computations for $m/p$ rows of $\WW$.
Likewise, the computation at line \ref{line:naive:computeH} consists of computing $\WW^T\AA_i$, $\WW^T\WW$, and performing the Local Update Computations for $n/p$ columns of $\HH$.
In the dense case, this amounts to $4mnk/p+(m+n)k^2+F(m/p,n/p,k)$ flops.
In the sparse case, processor $i$ performs $2(\nnz(\AA_i)+\nnz(\AA^i))k$ flops to compute $\AA^i\HH^T$ and $\WW^T\AA_i$ instead of $4mnk/p$. 

\subsubsection{Communication Cost}

The size of $\WW$ is $mk$ words, and the size of $\HH$ is $nk$ words.
Thus, the communication cost of the all-gathers at lines \ref{line:naive:allgatherH} and \ref{line:naive:allgatherW}, based on the expression given in Section \ref{sec:collectives} is $\alpha\cdot 2\log p + \beta\cdot (m+n)k$.

\subsubsection{Memory Requirements} \label{sec:memory}

The local memory requirement includes storing each processor's part of matrices $\AA$, $\WW$, and $\HH$.
In the case of dense $\AA$, this is $2mn/p+(m+n)k/p$ words, as $\AA$ is stored twice; in the sparse case, processor $i$ requires $\nnz(\AA_i)+\nnz(\AA^i)$ words for the input matrix and $(m+n)k/p$ words for the output factor matrices.
Local memory is also required for storing temporary matrices $\WW$ and $\HH$ of size $(m+n)k$ words.

\subsection{\ParNMF}
\label{sec:parNMF}

We present our proposed algorithm, \ParNMF, as Algorithm \ref{alg:2D}. 
The main ideas of the algorithm are to (1) exploit the independence of Local Update Computations for rows of $\WW$ and columns of $\HH$ and (2) use communication-optimal matrix multiplication algorithms to set up the Local Update Computations.
The naive approach (Algorithm \ref{alg:naive}) shares the first property, by parallelizing over rows of $\WW$ and columns of $\HH$, but it uses parallel matrix multiplication algorithms that communicate more data than necessary.
The central intuition for communication-efficient parallel algorithms for computing $\HH\HH^T$, $\AA\HH^T$, $\WW^T\WW$, and $\WW^T\AA$ comes from a classification proposed by Demmel et al. \cite{DE+13}.
They consider three cases, depending on the relative sizes of the dimensions of the matrices and the number of processors; the four multiplies for NMF fall into either the ``one large dimension'' or ``two large dimensions" cases.
\ParNMF\ uses a careful data distribution in order to use a communication-optimal algorithm for each of the matrix multiplications, while at the same time exploiting the parallelism in the \LUC.

The algorithm uses a 2D distribution of the data matrix $\AA$ across a $p_r \times p_c$ grid of processors (with $p=p_rp_c$), as shown in Figure \ref{fig:2D-distribution}.
As we derive in the subsequent subsections, Algorithm \ref{alg:2D} performs an alternating method in parallel with a per-iteration bandwidth cost of $O\lt(\min\lt\{\sqrt{mnk^2/p},nk\rt\}\rt)$ words, latency cost of $O(\log p)$ messages, and load-balanced computation (up to the sparsity pattern of $\AA$ and convergence rates of local BPP computations).

To minimize the communication cost and local memory requirements, in the typical case $p_r$ and $p_c$ are chosen so that $m/p_r\approx n/p_c\approx \sqrt{mn/p}$, in which case the bandwidth cost is $O\lt(\sqrt{mnk^2/p}\rt)$.
If the matrix is very tall and skinny, i.e., $m/p>n$, then we choose $p_r=p$ and $p_c=1$.
In this case, the distribution of the data matrix is 1D, and the bandwidth cost is $O(nk)$ words.

The matrix distributions for Algorithm \ref{alg:2D} are given in Figure \ref{fig:2D-distribution}; we use a 2D distribution of $\AA$ and 1D distributions of $\WW$ and $\HH$.
Recall from Table \ref{tab:notation} that $\M{M}_i$ and $\M{M}^i$  denote row and column blocks of $\M{M}$, respectively.
Thus, the notation $(\WW_i)_j$ denotes the $j$th row block within the $i$th row block of $\WW$.
Lines \ref{line:2DsyrkH}--\ref{line:2DcompW} compute $\WW$ for a fixed $\HH$, and lines \ref{line:2DsyrkW}--\ref{line:2DcompH} compute $\HH$ for a fixed $\WW$; note that the computations and communication patterns for the two alternating iterations are analogous.

In the rest of this section, we derive the per-iteration computation and communication costs, as well as the local memory requirements.
We also argue the communication-optimality of the algorithm in the dense case.
Table \ref{tab:costs} summarizes the results of this section and compares them to \NaiveAlg.

\begin{algorithm}[t!]
\caption{$[\WW,\HH] = \text{\ParNMF}(\AA,k)$}
\label{alg:2D}
\begin{algorithmic}[1]
\small
\Require $\AA$ is an $m\times n$ matrix distributed across a $p_r\times p_c$ grid of processors, $k$ is rank of approximation
\Require Local matrices: $\AA_{ij}$ is $m/p_r\times n/p_c$, $\WW_i$ is $m/p_r\times k$, $(\WW_i)_j$ is $m/p\times k$, $\HH_j$ is $k\times n/p_c$, and $(\HH_j)_i$ is $k\times n/p$
\State $p_{ij}$ initializes $(\HH_j)_i$
\While{stopping criteria not satisfied}
	\Statex \quad\; \textbf{/* Compute $\WW$ given $\HH$ */} 
	\State $p_{ij}$ computes $\M{U}_{ij}=(\HH_j)_i{(\HH_j)_i}^T$
		\label{line:2DsyrkH}
	\State compute $\HH\HH^T {=} \sum_{i,j} \M{U}_{ij}$ using all-reduce across all procs
		\label{line:2Dall-reduceH}
		\Comment{$\HH\HH^T$ is $k\times k$ and symmetric}
	\State $p_{ij}$ collects $\HH_j$ using all-gather across proc columns
		\label{line:2Dall-gatherH}
	\State $p_{ij}$ computes $\M{V}_{ij}=\AA_{ij}\HH_j^T$
		\label{line:2DNEW}
		\Comment{$\M{V}_{ij}$ is $m/p_r \times k$}
	\State compute $(\AA\HH^T)_i {=} \sum_j \M{V}_{ij}$ using reduce-scatter across proc row to achieve row-wise distribution of $(\AA\HH^T)_i$
		\Comment{$p_{ij}$ owns $m/p\times k$ submatrix $((\AA\HH^T)_i)_j$}
		\label{line:2Dreduce-scatterAHT}
		%\Statex \Comment{Use Equation \eqref{eqn:muupdate},\eqref{eqn:halsupdate} for \MU\ and \HALS\ respectively. ANLS/BPP will implement the SolveBPP function \ref{sec:BPP}.} 	
	\State $p_{ij}$ computes $(\WW_i)_j \gets \text{UpdateW}(\HH\HH^T,((\AA\HH^T)_i)_j)$
		\label{line:2DcompW}
	\Statex \quad\; \textbf{/* Compute $\HH$ given $\WW$ */}
	\State $p_{ij}$ computes $\M{X}_{ij}={(\WW_i)_j}^T(\WW_i)_j$
		\label{line:2DsyrkW}
	\State compute $\WW^T\WW {=} \sum_{i,j} \M{X}_{ij}$ using all-reduce across all procs
		\label{line:2Dall-reduceW}
		\Comment{$\WW^T\WW$ is $k\times k$ and symmetric}
	\State $p_{ij}$ collects $\WW_i$ using all-gather across proc rows
		\label{line:2Dall-gatherW}
	\State $p_{ij}$ computes $\M{Y}_{ij}={\WW_i}^T\AA_{ij}$
		\label{line:2DNEH}
		\Comment{$\M{Y}_{ij}$ is $k\times n/p_c$}
	\State compute $(\WW^T\AA)^j = \sum_i \M{Y}_{ij}$ using reduce-scatter across proc columns to achieve column-wise distribution of $(\WW^T\AA)^j$
		\Comment{$p_{ij}$ owns $k\times n/p$ submatrix $((\WW^T\AA)^j)^i$}
		\label{line:2Dreduce-scatterWTA}
		%\Statex \Comment{Use Equation \eqref{eqn:muupdate},\eqref{eqn:halsupdate} for \MU\ and \HALS\ respectively. ANLS/BPP will implement the SolveBPP function \ref{sec:BPP}.}
	\State $p_{ij}$ computes $((\HH^j)^i)^T \gets \text{UpdateH}(\WW^T\WW,(((\WW^T\AA)^j)^i)^T)$
		\label{line:2DcompH}
\EndWhile
\Ensure $\displaystyle \WW, \HH \approx \Argmin{\M{\tilde W} \geq 0, \M{\tilde H} \geq 0} \|\AA- \M{\tilde W} \M{\tilde H}\|$
\Ensure $\WW$ is an $m\times k$ matrix distributed row-wise across processors, $\HH$ is a $k\times n$ matrix distributed column-wise across processors
\normalsize
\end{algorithmic}
\end{algorithm}

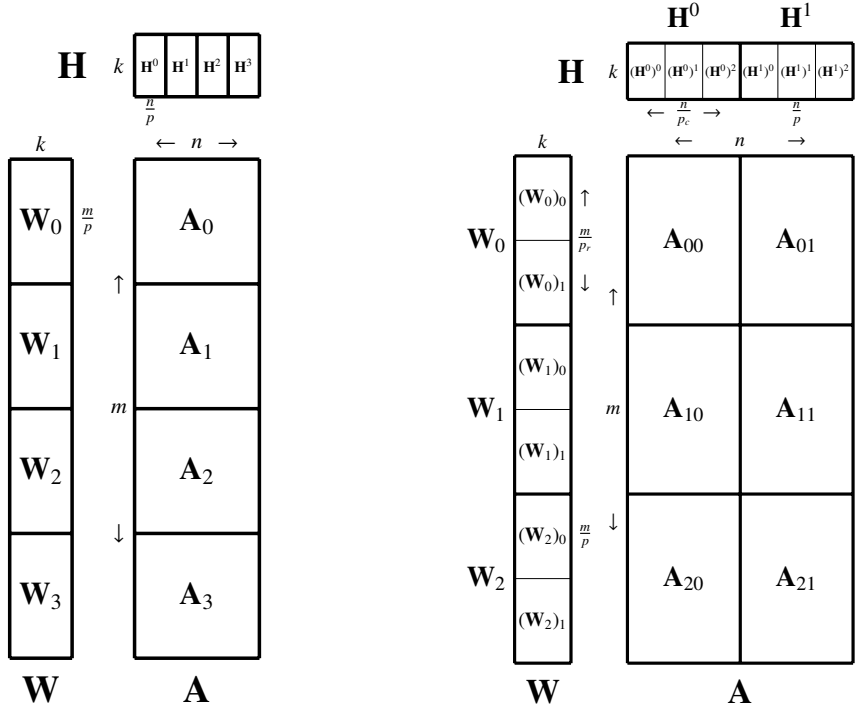
\begin{figure}[t!]
\centering
\begin{subfigure}[b]{.5\textwidth}
\centering
\scalebox{.83}{\begin{tikzpicture}

%% draw matrix distributions with scaled grids
% A
\draw[xscale=2,yscale=2,ultra thick] (0,0) grid (1,4);
% W
\draw[yscale=2,ultra thick] (-2,0) grid (-1,4);
% H
\draw[xscale=1/2,ultra thick] (0,9) grid (4,10);

%% add (sub)matrix names 
% A
\node[draw=none] at (1,-0.5) {\LARGE $\AA$};
\node[draw=none] at (1,7) {\Large $\AA_0$};
\node[draw=none] at (1,5) {\Large $\AA_1$};
\node[draw=none] at (1,3) {\Large $\AA_2$};
\node[draw=none] at (1,1) {\Large $\AA_3$};
% W
\node[draw=none] at (-1.5,-0.5) {\LARGE $\WW$};
\node[draw=none] at (-1.5,7) {\Large $\WW_0$};
\node[draw=none] at (-1.5,5) {\Large $\WW_1$};
\node[draw=none] at (-1.5,3) {\Large $\WW_2$};
\node[draw=none] at (-1.5,1) {\Large $\WW_3$};
% H
\node[draw=none] at (-1,9.5) {\LARGE $\HH$};
\node[draw=none] at (0.25,9.5) {\scriptsize $\HH^0$};
\node[draw=none] at (0.75,9.5) {\scriptsize $\HH^1$};
\node[draw=none] at (1.25,9.5) {\scriptsize $\HH^2$};
\node[draw=none] at (1.75,9.5) {\scriptsize $\HH^3$};

% label vertical dimensions
\node[draw=none] at (-0.25,9.5) {$k$};
\node[draw=none] at (-0.25,4) {$m$};
\node[draw=none] at (-0.25,6) {$\uparrow$};
\node[draw=none] at (-0.25,2) {$\downarrow$};
\node[draw=none] at (-0.75,7) {$\frac{m}{p}$};
% label horizontal dimensions
\node[draw=none] at (-1.5,8.25) {$k$};
\node[draw=none] at (1,8.25) {$n$};
\node[draw=none] at (0.5,8.25) {$\leftarrow$};
\node[draw=none] at (1.5,8.25) {$\rightarrow$};
\node[draw=none] at (0.25,8.75) {$\frac{n}{p}$};

\end{tikzpicture}}
\subcaption{1D Distribution with $p=p_r=4$ and $p_c=1$.}
\label{1D}
\end{subfigure}%
\begin{subfigure}[b]{.5\textwidth}
\centering
\scalebox{.75}{\begin{tikzpicture}

%% draw matrix distributions with scaled grids
% A
\draw[xscale=2,yscale=3,ultra thick] (0,0) grid (2,3);
% W
\draw[yscale=3/2] (-2,0) grid (-1,6);
\draw[yscale=3,ultra thick] (-2,0) grid (-1,3);
% H
\draw[xscale=2/3] (0,10) grid (6,11);
\draw[xscale=2,ultra thick] (0,10) grid (2,11);

%% add (sub)matrix names 
% A
\node[draw=none] at (2,-0.5) {\LARGE $\AA$};
\node[draw=none] at (1,7.5) {\Large $\AA_{00}$};
\node[draw=none] at (1,4.5) {\Large $\AA_{10}$};
\node[draw=none] at (1,1.5) {\Large $\AA_{20}$};
\node[draw=none] at (3,7.5) {\Large $\AA_{01}$};
\node[draw=none] at (3,4.5) {\Large $\AA_{11}$};
\node[draw=none] at (3,1.5) {\Large $\AA_{21}$};
% W
\node[draw=none] at (-1.5,-0.5) {\LARGE $\WW$};
\node[draw=none] at (-2.5,7.5) {\Large $\WW_0$};
\node[draw=none] at (-2.5,4.5) {\Large $\WW_1$};
\node[draw=none] at (-2.5,1.5) {\Large $\WW_2$};
\node[draw=none] at (-1.5,8.25) {$(\WW_0)_0$};
\node[draw=none] at (-1.5,6.75) {$(\WW_0)_1$};
\node[draw=none] at (-1.5,5.25) {$(\WW_1)_0$};
\node[draw=none] at (-1.5,3.75) {$(\WW_1)_1$};
\node[draw=none] at (-1.5,2.25) {$(\WW_2)_0$};
\node[draw=none] at (-1.5,0.75) {$(\WW_2)_1$};
% H
\node[draw=none] at (-1,10.5) {\LARGE $\HH$};
\node[draw=none] at (1,11.5) {\Large $\HH^0$};
\node[draw=none] at (3,11.5) {\Large $\HH^1$};
\node[draw=none] at (.33,10.5) {\scriptsize $(\HH^0)^0$};
\node[draw=none] at (1,10.5) {\scriptsize $(\HH^0)^1$};
\node[draw=none] at (1.66,10.5) {\scriptsize $(\HH^0)^2$};
\node[draw=none] at (2.33,10.5) {\scriptsize $(\HH^1)^0$};
\node[draw=none] at (3,10.5) {\scriptsize $(\HH^1)^1$};
\node[draw=none] at (3.66,10.5) {\scriptsize $(\HH^1)^2$};

% label vertical dimensions
\node[draw=none] at (-0.25,10.5) {$k$};
\node[draw=none] at (-0.25,4.5) {$m$};
\node[draw=none] at (-0.25,6.5) {$\uparrow$};
\node[draw=none] at (-0.25,2.5) {$\downarrow$};
\node[draw=none] at (-0.75,7.5) {$\frac{m}{p_r}$};
\node[draw=none] at (-0.75,8.25) {$\uparrow$};
\node[draw=none] at (-0.75,6.75) {$\downarrow$};
\node[draw=none] at (-0.75,2.25) {$\frac{m}{p}$};
% label horizontal dimensions
\node[draw=none] at (-1.5,9.25) {$k$};
\node[draw=none] at (2,9.25) {$n$};
\node[draw=none] at (1,9.25) {$\leftarrow$};
\node[draw=none] at (3,9.25) {$\rightarrow$};
\node[draw=none] at (1,9.75) {$\frac{n}{p_c}$};
\node[draw=none] at (0.5,9.75) {$\leftarrow$};
\node[draw=none] at (1.5,9.75) {$\rightarrow$};
\node[draw=none] at (3,9.75) {$\frac{n}{p}$};

\end{tikzpicture}}
\subcaption{2D Distribution with $p_r=3$ and $p_c=2$.}
\label{2D}
\end{subfigure}
\caption{Data distributions for \ParNMF. 
Note that for the 2D distribution, $\AA_{ij}$ is $m/p_r \times m/p_c$, $\WW_i$ is $m/p_r \times k$, $(\WW_i)_j$ is $m/p\times k$, $\HH_j$ is $k\times n/p_c$, and $(\HH^j)^i$ is $k\times n/p$.}
\label{fig:2D-distribution}
\end{figure}

\begin{figure}[t!]
\centering
\includegraphics[width=\textwidth]{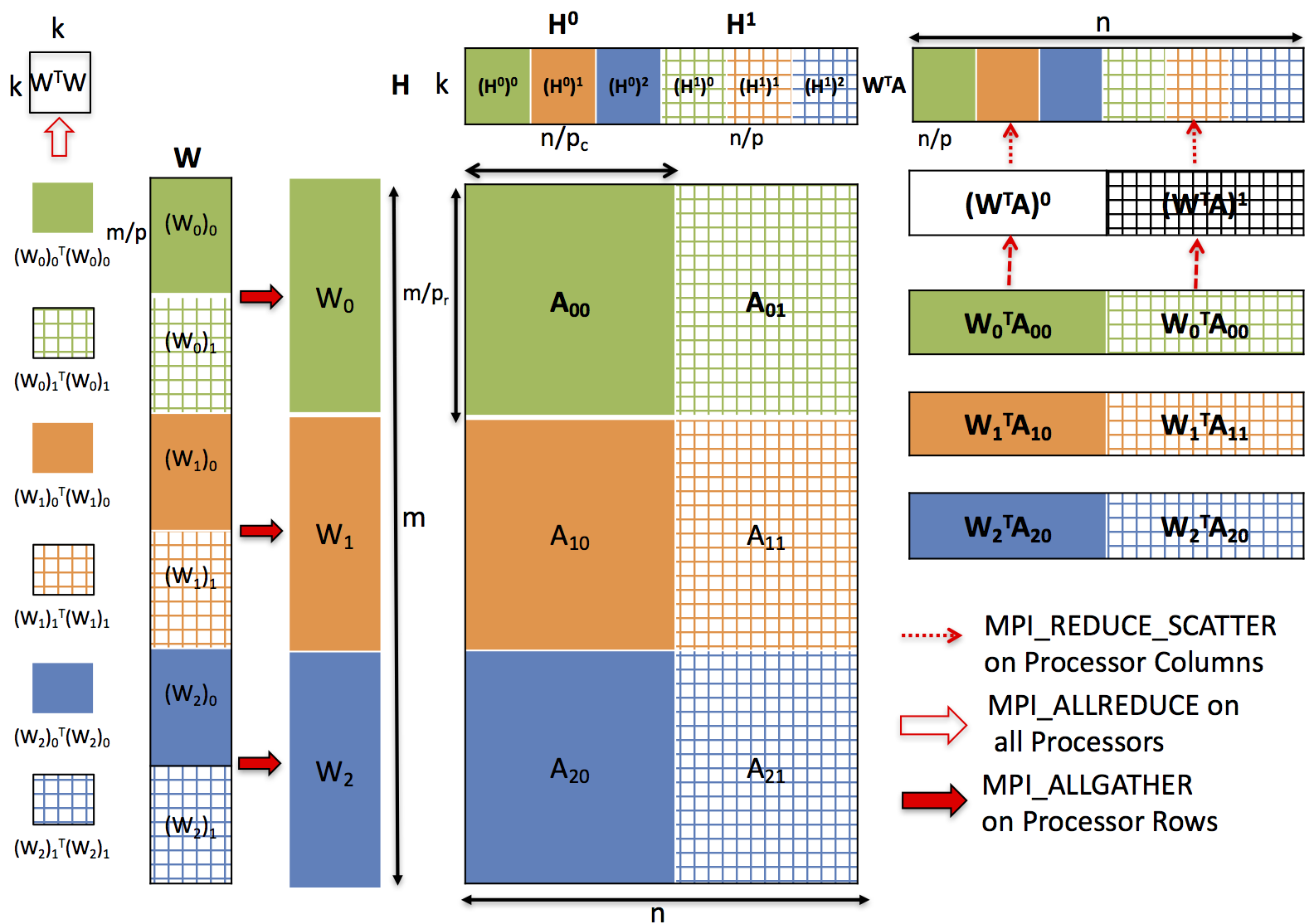}
\caption{Parallel matrix multiplications within \ParNMF\ for finding $\HH$ given $\WW$, with $p_r=3$ and $p_c=2$.  The computation of $\WW^T\WW$ appears on the far left; the rest of the figure depicts computation of $\WW^T\AA$.}
\label{fig:1D-Algorithm}
\end{figure}

\subsubsection{Computation Cost}

Local matrix computations occur at lines \ref{line:2DsyrkH}, \ref{line:2DNEW}, \ref{line:2DsyrkW}, and \ref{line:2DNEH}.
In the case that $\AA$ is dense, each processor performs 
$$\frac np k^2+2\frac{m}{p_r}\frac{n}{p_c}k+\frac mp k^2+2\frac{m}{p_r}\frac{n}{p_c}k=4\frac{mnk}{p}+\frac{(m+n)k^2}{p}$$
flops.
In the case that $\AA$ is sparse, processor $(i,j)$ performs $(m+n)k^2/p$ flops in computing $\M{U}_{ij}$ and $\M{X}_{ij}$, and $4\nnz(\AA_{ij})k$ flops in computing $\M{V}_{ij}$ and $\M{Y}_{ij}$.
Local update computations occur at lines \ref{line:2DcompW} and \ref{line:2DcompH}.
In each case, the symmetric positive semi-definite matrix is $k\times k$ and the number of columns/rows of length $k$ to be computed are $m/p$ and $n/p$, respectively.
These costs together are given by $F(m/p,n/p,k)$.
There are computation costs associated with the all-reduce and reduce-scatter collectives, both those contribute only to lower order terms.

\subsubsection{Communication Cost}
\label{sec:alg:comm}

Communication occurs during six collective operations (lines \ref{line:2Dall-reduceH}, \ref{line:2Dall-gatherH}, \ref{line:2Dreduce-scatterAHT}, \ref{line:2Dall-reduceW}, \ref{line:2Dall-gatherW}, and \ref{line:2Dreduce-scatterWTA}).
We use the cost expressions presented in Section \ref{sec:collectives} for these collectives.
The communication cost of the all-reduces (lines \ref{line:2Dall-reduceH} and \ref{line:2Dall-reduceW}) is $\alpha \cdot 4\log p+\beta \cdot 2k^2$; 
the cost of the two all-gathers (lines \ref{line:2Dall-gatherH} and \ref{line:2Dall-gatherW}) is $\alpha \cdot \log p + \beta \cdot \lt( (p_r{-}1)nk/p + (p_c{-}1)mk/p\rt)$; and
the cost of the two reduce-scatters (lines \ref{line:2Dreduce-scatterAHT} and  \ref{line:2Dreduce-scatterWTA}) is $\alpha \cdot \log p + \beta \cdot \lt( (p_c{-}1)mk/p + (p_r{-}1)nk/p\rt)$.

We note that \LUC\ may introduce significant communication cost, depending on the NMF algorithm used.
The normalization of columns of $\WW$ within \HALS, for example, introduces an extra $k\log p$ latency cost.
We will ignore such costs in our general analysis.

In the case that $m/p<n$, we choose $p_r=\sqrt{mp/n} >1$ and $p_c=\sqrt{np/m}>1$, and these communication costs simplify to $\alpha \cdot O(\log p) + \beta \cdot O(mk/p_r+nk/p_c+k^2) = \alpha \cdot O(\log p) + \beta \cdot O(\sqrt{mnk^2/p}+k^2)$.
In the case that $m/p\geq n$, we choose $p_c=1$, and the costs simplify to $\alpha \cdot O(\log p) + \beta \cdot O(nk)$.

\ignore{
Thus, the total cost per iteration is
$$\gamma \cdot O\lt( \frac{mnk}{p} + C_\text{BPP}\lt(k,\frac{m+n}{p}\rt) \rt) + \beta \cdot O\lt(\frac{mk}{p_r}+\frac{nk}{p_c}+k^2\rt) + \alpha \cdot O(\log p).$$
In order to minimize the bandwidth cost of the algorithm, we choose $p_r=\sqrt{mp/n}$ and $p_c=\sqrt{np/m}$; this yields a per-iteration cost of 
$$\gamma \cdot O\lt( \frac{mnk}{p} + C_\text{BPP}\lt(k,\frac{m+n}{p}\rt) \rt) + \beta \cdot O\lt(\sqrt{\frac{mnk^2}{p}}+k^2\rt) + \alpha \cdot O(\log p).$$
We note that these choices of $p_r$ and $p_c$ are well defined assuming $m/p<n$ and $n/p<m$; if one of these assumptions is not satisfied we resort to Algorithm \ref{alg:1D}.
}

\subsubsection{Memory Requirements}

The local memory requirement includes storing each processor's part of matrices $\AA$, $\WW$, and $\HH$.
In the case of dense $\AA$, this is $mn/p+(m+n)k/p$ words; in the sparse case, processor $(i,j)$ requires $\nnz(\AA_{ij})$ words for the input matrix and $(m+n)k/p$ words for the output factor matrices.
Local memory is also required for storing temporary matrices $\WW_j$, $\HH_i$, $\M{V}_{ij}$, and $\M{Y}_{ij}$, of size $2mk/p_r+2nk/p_c)$ words.

In the dense case, assuming $k<n/p_c$ and $k<m/p_r$, the local memory requirement is no more than a constant times the size of the original data.
For the optimal choices of $p_r$ and $p_c$, this assumption simplifies to $k<\max\lt\{\sqrt{mn/p},m/p\rt\}$.

We note that if the temporary memory requirements become prohibitive, the computation of $((\AA \HH^T)_i)_j$ and $((\WW^T\AA)_j)_i$ via all-gathers and reduce-scatters can be blocked, decreasing the local memory requirements at the expense of greater latency costs.
When $\AA$ is sparse and $k$ is large enough, the memory footprint of the factor matrices can be larger than the input matrix.
In this case, the extra temporary memory requirements can become prohibitive; we observed this for a sparse data set with very large dimensions (see Section \ref{sec:webbase-2001}).
We leave the implementation of the blocked algorithm to future work.

\subsubsection{Communication Optimality}

In the case that $\AA$ is dense, Algorithm \ref{alg:2D} provably minimizes communication costs.
Theorem \ref{thm:LB} establishes the bandwidth cost lower bound for any algorithm that computes $\WW^T\AA$ or $\AA\HH^T$ each iteration.
A latency lower bound of $\Omega(\log p)$ exists in our communication model for any algorithm that aggregates global information \cite{CH+07}, and for NMF, this global aggregation is necessary in each iteration.
% to compute residual error in the case that $\AA$ is distributed across all $p$ processors, for example.
Based on the costs derived above, \ParNMF\ is communication optimal under the assumption $k<\sqrt{mn/p}$, matching these lower bounds to within constant factors.

\begin{theorem}[\cite{DE+13}]
\label{thm:LB}
Let $\AA \in \Rn{m \times n}$, $\WW \in \Rn{m \times k}$, and $\HH \in \Rn{k \times n}$ be dense matrices, with $k<n\leq m$.  If $k < \sqrt{mn/p}$, then any distributed-memory parallel algorithm on $p$ processors that load balances the matrix distributions and computes $\WW^T \AA$ and/or $\AA \HH^T$ must communicate at least $\Omega(\min\{\sqrt{mnk^2/p},nk\})$ words along its critical path.
\end{theorem}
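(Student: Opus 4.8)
The plan is to run the classical volumetric (Loomis--Whitney) communication lower-bound argument, in the memory-independent form used by Demmel et al.~\cite{DE+13}, applied to the $mnk$-multiplication classical algorithm for $\WW^T\AA$; the case of $\AA\HH^T$ is entirely symmetric, since it produces the same multiset of three relevant array sizes $\{mk/p, mn/p, nk/p\}$. First I would reduce to a single processor: a classical algorithm for $\WW^T\AA$ issues exactly $mnk$ scalar multiplications $W_{i\ell}A_{ij}$, indexed by triples $(i,\ell,j)\in[m]\times[k]\times[n]$, so by averaging some processor $P$ performs at least $mnk/p$ of them; let $\mathcal{T}$ be that set and $F=|\mathcal{T}|\ge mnk/p$. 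The projections of $\mathcal{T}$ onto the coordinate planes $(i,\ell)$, $(i,j)$, and $(\ell,j)$ index, respectively, the distinct entries of $\WW$, of $\AA$, and of $\WW^T\AA$ that $P$ ever touches; writing $n_W, n_A, n_C$ for the sizes of these projections, the Loomis--Whitney inequality gives $F \le \sqrt{n_W\, n_A\, n_C}$.

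Next I would relate the footprint sizes $n_W, n_A, n_C$ to the number of words $W$ that $P$ sends or receives (so that $P$'s critical-path word count is at least $W$). Because the input and output matrices are load balanced, $P$ initially holds $O(mk/p)$ entries of $\WW$ and $O(mn/p)$ entries of $\AA$ and finally holds $O(nk/p)$ entries of $\WW^T\AA$; any other $\WW$- or $\AA$-entry it uses must have been received, and any output entry it contributes to but does not itself store must have had its running partial sum shipped out by $P$. Hence $n_W \le O(mk/p) + W$, $n_A \le O(mn/p) + W$, and $n_C \le O(nk/p) + W$. Substituting into the Loomis--Whitney bound yields $mnk/p \le \sqrt{(O(mk/p)+W)(O(mn/p)+W)(O(nk/p)+W)}$, and the rest is elementary: if $W > mn/p$ then, since $k < \sqrt{mn/p}$, we already have $W > mn/p > \sqrt{mnk^2/p}\ge\min\{\sqrt{mnk^2/p},nk\}$; otherwise the middle factor is $O(mn/p)$, and after dividing it out one of the cross terms $(m{+}n)kW/p$ or $W^2$ must carry an $\Omega(mnk^2/p)$ share, forcing $W = \Omega(nk)$ in the first case (using $m \ge n$) or $W = \Omega(\sqrt{mnk^2/p})$ in the second. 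Combining cases gives $W = \Omega(\min\{\sqrt{mnk^2/p}, nk\})$.

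The step I expect to be the main obstacle is the footprint-to-communication bound for the output, i.e., justifying $n_C \le O(nk/p) + W$ for an arbitrary, non-oblivious, possibly reduction-tree-structured schedule: a processor may accumulate partial sums for a great many output entries locally and then send them off in bulk, so one must argue that the number of distinct output entries whose partials leave $P$ is at most the number of words $P$ communicates, and that this stays true however reductions are organized. Getting the hidden constants and the boundary between the ``1D'' regime $m/p \ge n$ (bound $nk$) and the ``2D'' regime $m/p < n$ (bound $\sqrt{mnk^2/p}$) exactly right --- which is precisely what the hypothesis $k < \sqrt{mn/p}$ controls --- also requires care. Alternatively one can sidestep the re-derivation entirely and invoke the rectangular memory-independent lower bound of \cite{DE+13} as a black box, checking only that the load-balancing hypotheses are met and substituting the present dimensions.
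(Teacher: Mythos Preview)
Your proposal is correct, and in fact the alternative you mention in your final sentence is exactly what the paper does. The paper's proof is a three-line black-box invocation of \cite[Section II.B]{DE+13}: the hypothesis $k < \sqrt{mn/p}$ rules out the ``3 large dimensions'' regime, leaving either the ``2 large dimensions'' bound $\Omega(\sqrt{mnk^2/p})$ (when $p>m/n$) or the ``1 large dimension'' bound $\Omega(nk)$ (when $p<m/n$); since $nk<\sqrt{mnk^2/p}$ precisely when $p<m/n$, the two cases collapse to $\Omega(\min\{\sqrt{mnk^2/p},nk\})$.

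Your Loomis--Whitney derivation is a faithful reconstruction of what lies inside \cite{DE+13}, so it is not a genuinely different approach but rather an unpacking of the cited result. It buys self-containment and makes the role of load balancing explicit, at the cost of re-proving something already in the literature. The concern you flag about the output-footprint bound $n_C \le O(nk/p) + W$ is real and is handled in \cite{DE+13} (each partial sum shipped out counts as a word), so if you go the black-box route you need not worry about it. Given that the theorem statement already carries the citation \cite{DE+13}, the paper's short proof is the appropriate choice; your longer argument would be more suitable if the goal were an expository or self-contained treatment.
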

\begin{proof}
The proof follows directly from \cite[Section II.B]{DE+13}.
Each matrix multiplication $\WW^T \AA$ and $\AA \HH^T$ has dimensions $k<n\leq m$, so the assumption $k<\sqrt{mn/p}$ ensures that neither multiplication has ``3 large dimensions.''
Thus, the communication lower bound is either $\Omega(\sqrt{mnk^2/p})$ in the case of $p>m/n$ (or ``2 large dimensions''), or $\Omega(nk)$, in the case of $p<m/n$ (or ``1 large dimension'').
If $p<m/n$, then $nk<\sqrt{mnk^2/p}$, so the lower bound can be written as $\Omega(\min\{\sqrt{mnk^2/p},nk\})$.
\end{proof}

We note that the communication costs of Algorithm \ref{alg:2D} are the same for dense and sparse data matrices (the data matrix itself is never communicated).
In the case that $\AA$ is sparse, this communication lower bound does not necessarily apply, as the required data movement depends on the sparsity pattern of $\AA$.
Thus, we cannot make claims of optimality in the sparse case (for general $\AA$).
The communication lower bounds for $\WW^T \AA$ and/or $\AA \HH^T$ (where $\AA$ is sparse) can be expressed in terms of hypergraphs that encode the sparsity structure of $\AA$ \cite{BDKS15}.
Indeed, hypergraph partitioners have been used to reduce communication and achieve load balance for a similar problem: computing a low-rank representation of a sparse tensor (without non-negativity constraints on the factors) \cite{KU15}.

\section{Experiments}
\label{sec:experiment}
\newcommand{\NLS}{LUC }

In this section, we describe our implementation of \ParNMF\ and evaluate its performance.
We identify a few synthetic and real world data sets to experiment with \ParNMF\ with dimensions that span from hundreds to millions. 
We compare the performance and exploring scaling behavior of different NMF algorithms -- \MU, \HALS, and ANLS/BPP (\BPP), implemented using the parallel \ParNMF\ framework.  
The code and the  datasets used for conducting the experiments can be downloaded from \url{https://github.com/ramkikannan/nmflibrary}. 

\subsection{Experimental Setup}

\subsubsection{Data Sets}\label{sec:datasets}

We used sparse and dense matrices that are either synthetically generated or from real world applications. We explain the data sets in this section.

\begin{itemize}
\item Dense Synthetic Matrix: We generate a low rank matrix as the product of two uniform random matrices of size 207,360 $\times$ 100 and 100 $\times$ 138,240. 
The dimensions of this matrix are chosen to be evenly divisible for a particular set of processor grids.  
\item Sparse Synthetic Matrix: We generate a random sparse Erd\H{o}s-R\'{e}nyi matrix of the size 207,360 $\times$ 138,240 with density of 0.001.  That is, every entry is nonzero with probability 0.001.
\item Dense Real World Matrix ({\em Video}):  NMF is used on video data for background subtraction in order to detect moving objects. The low rank matrix $\hat{\AA} = \WW \HH$ represents background and the error matrix $\AA - \hat{\AA}$ represents moving objects.  Detecting moving objects has many real-world applications such as traffic estimation \cite{Fujimoto2014} and security monitoring \cite{BSJJZ2015}.
In the case of detecting moving objects, only the last minute or two of video is taken from the live video camera. The algorithm to incrementally adjust the NMF based on the new streaming video is presented in \cite{kim2013nonnegative}. To simulate this scenario, we collected a video in a busy intersection of the Georgia Tech campus at 20 frames per second. From this video, we took video for approximately 12 minutes and  then reshaped the matrix such that every RGB frame is a column of our matrix, so that the matrix is dense with size 1,013,400 $\times$ 13,824.  
\item Sparse Real World Matrix ({\em Webbase}): This data set is a directed sparse graph whose nodes correspond to webpages (URLs) and edges correspond to hyperlinks from one webpage to another.
The NMF output of this directed graph helps us understand clusters in graphs.
We consider two versions of the data set: {\em webbase-1M} and {\em webbase-2001}.
The dataset webbase-1M contains about 1 million nodes (1,000,005) and 3.1 million edges (3,105,536), and was first reported by Williams et al. \cite{Williams2009}.  
The version webbase-2001 has about 118 million nodes (118,142,155) and over 1 billion edges (1,019,903,190); it was first reported by Boldi and Vigna  \cite{Boldi2004}.  
Both data sets are available in the University of Florida Sparse Matrix Collection \cite{DH11} and the latter {\em webbase-2001} being the largest among the entire collection.
\item Text data ({\em Stack Exchange}): 
Stack Exchange is a network of question-and-answer websites on topics in varied fields, each site covering a specific topic, where questions, answers, and users are subject to a reputation award process. There are many Stack Exchange forums, such as {\em ask ubuntu, mathematics, latex}. 
We downloaded the latest anonymized dump of all 
user-contributed content on the Stack Exchange network from \url{https://archive.org/details/stackexchange} as of 28-Jul-2016. 
%Each site is formatted as a separate archive consisting of XML files zipped via 7-zip using bzip2 compression. 
%Each site archive includes Posts, Users, Votes, Comments, PostHistory and PostLinks. 
%For complete schema information, see the included readme.txt. 
We used only the questions from the most popular site called Stackoverflow and did not include the answers and comments. 
We removed the standard 571 English stop words (such as {\em are, am, be, above, below}) and then used snowball stemming available through the Natural Language Toolkit (NLTK) package (\url{www.nltk.org}).
After this initial pre-processing, we deleted HTML tags (such as {\em lt, gt, em}) from the posts. 
The resulting bag-of-words matrix has a vocabulary of size 627,047 over 11,708,841 documents with 365,168,945 non-zero entries. 
%We ran NMF on this dataset for finding 50 topics whose sample topic words are presented at the end of this section. 
\end{itemize}

The size of all the real world data sets were adjusted to the nearest size for uniformly distributing the matrix. 

\subsubsection{Implementation Platform}

We conducted our experiments on ``Rhea'' at the Oak Ridge Leadership Computing Facility (OLCF).
Rhea is a commodity-type Linux cluster with a total of 512 nodes and a 4X FDR Infiniband interconnect.
Each node contains dual-socket 8-core Intel Sandy Bridge-EP processors and 128 GB of memory.
Each socket has a shared 20MB L3 cache, and each core has a private 256K L2 cache. 

%\grey{Do we want to start using the acronym for our software and discuss it here? We will park this for camera ready}

Our objective of the implementation is using open source software as much as possible 
to promote reproducibility and reuse of our code.
The entire C++ code was developed using the matrix library Armadillo \cite{sanderson2010}. 
In Armadillo, the elements of the dense matrix are stored in column major order and the sparse matrices in Compressed Sparse Column (CSC) format.
For dense BLAS and LAPACK operations, we linked Armadillo with Intel MKL -- the default LAPACK/BLAS library in RHEA. It is also easy to link Armadillo with OpenBLAS \cite{xianyi2015}. 
We use Armadillo's own implementation of sparse matrix-dense matrix multiplication, the default GNU C++ Compiler (g++ (GCC) 4.8.2) and MPI library (Open MPI 1.8.4)  on RHEA.  We chose the commodity cluster with open source software so that the numbers presented here are representative of common use. 

%\grey{Need to update this last phrase with compilation setup on Rhea}. 

\subsubsection{Algorithms}

In our experiments, we considered the following algorithms: 
\begin{itemize}
	\item \MU: \ParNMF\ (Algorithm \ref{alg:2D}) with MU (Equation \eqref{eqn:muupdate})
	\item \HALS: \ParNMF\ (Algorithm \ref{alg:2D}) with HALS (Equation \eqref{eqn:halsupdate})
	\item \BPP: \ParNMF\ (Algorithm \ref{alg:2D}) with BPP (Section \ref{sec:BPP})
	\item \Naive: \NaiveAlg\ (Algorithm \ref{alg:naive}, Section \ref{sec:naive})
\end{itemize}

Our implementation of \Naive\ (Algorithm \ref{alg:naive}) uses BPP but can be easily to extended to \MU\ and \HALS\ and other NMF algorithms. 
A detailed comparison of \NaiveAlg\ with \ParNMF\ is made in our earlier work \cite{KBP16}. 
We include some benchmark results from \Naive\ to reiterate the point that communication efficiency is key to obtaining reasonable performance, but we also omit other \Naive\ results in order to focus attention on comparisons among other algorithms.

For the algorithms based on \ParNMF, we use the processor grid that is closest to the theoretical optimum (see Section \ref{sec:alg:comm}) in order to minimize communication costs.
See Section \ref{sec:procgrid} for an empirical evaluation of varying processor grids for a particular algorithm and data set.

%We choose these three algorithms to confirm the following conclusions from the analysis of Section \ref{sec:parNMF}: the performance of a naive parallelization of \NaiveAlg\ (Algorithm \ref{alg:naive}) will be severely limited by communication overheads, and the correct choice of processor grid within Algorithm \ref{alg:2D} is necessary to optimize performance.
%To demonstrate the latter conclusion, we choose the two extreme choices of processor grids and test some data sets where a 1D processor grid is optimal (e.g., the Video matrix) and some where a squarish 2D grid is optimal (e.g., the Webbase matrix).

To ensure fair comparison among algorithms, the same random seed is used across different methods appropriately. 
That is, the initial random matrix $\HH$ is generated with the same random seed when testing with 
different algorithms (note that $\WW$ need not be initialized). 
In our experiments, we use number of iterations as the stopping criteria for all the algorithms.

%\grey{Need to update the following paragraph for our running times, maybe consider moving it elsewhere...}
While we would like to compare against other high-performance NMF algorithms in the literature, the only other distributed-memory implementations of which we're aware are implemented using Hadoop and are designed only for sparse matrices \cite{liao2014cloudnmf},
\cite{liu2010distributed}, \cite{gemulla2011large}, \cite{Yin2014} and \cite{Faloutsos2014}.
We stress that Hadoop is not designed for high performance computing of iterative numerical 
algorithms, requiring disk I/O between steps, so a run time comparison between a Hadoop 
implementation and a C++/MPI implementation is not a fair comparison of parallel algorithms.
A qualitative example of differences in run time is that a Hadoop implementation of the MU algorithm on 
a large sparse matrix of size $2^{17} \times 2^{16}$ with $2 \times {10^8}$ nonzeros (with k=8) 
takes on the order of 50 minutes per iteration \cite{liu2010distributed}, while our MU implementation 
takes 0.065 seconds per iteration for the synthetic data set (which is an order of magnitude larger in 
terms of rows, columns, and nonzeros) running on only 16 nodes. 
%Based on our survey, we are the fastest distributed NMF algorithm in the literature. 

% input file with plotting macros
% macros for plotting
\newcommand{\datafile}{}
\newcommand{\numiterations}{30}
\newcommand{\minvalue}{1}

% toggles whether or not to plot naive results (if not, the rows of the data file also need to be commented out)
\newif\ifnaive
% toggles ksweep vs scaling plot
\newif\ifksweep

\newcommand{\setcolors}{
\pgfplotsset{cycle list={
	red, fill=red \\ 
	blue, fill=blue \\ 
	green, fill=green \\ 
	red, pattern=crosshatch, pattern color=red \\
	blue, pattern=crosshatch, pattern color=blue \\
	green, pattern=crosshatch, pattern color=green \\
}};
}

% set options for grouped bar plot
\newcommand{\plotoptions}{
	ybar stacked,
	reverse legend,
	bar width=8pt,
	width=11cm, height=3.85cm,
	ylabel={Time (seconds)}, 
	y label style={yshift=-.25cm},
	ymin=0,
	\ifnaive 
		\ifksweep
			symbolic x coords={10-0,10-1,10-2,10-3,,20-0,20-1,20-2,20-3,,30-0,30-1,30-2,30-3,,40-0,40-1,40-2,40-3,,50-0,50-1,50-2,50-3},
		\else
			symbolic x coords={16-0,16-1,16-2,16-3,,96-0,96-1,96-2,96-3,,384-0,384-1,384-2,384-3,,864-0,864-1,864-2,864-3,,1536-0,1536-1,1536-2,1536-3},
		\fi
		xticklabels={MU,HALS,ABPP,Naive,MU,HALS,ABPP,Naive,MU,HALS,ABPP,Naive,MU,HALS,ABPP,Naive,MU,HALS,ABPP,Naive},
	\else  
		\ifksweep
			symbolic x coords={10-0,10-1,10-2,,,20-0,20-1,20-2,,,30-0,30-1,30-2,,,40-0,40-1,40-2,,,50-0,50-1,50-2,},
		\else
			symbolic x coords={16-0,16-1,16-2,,,96-0,96-1,96-2,,,384-0,384-1,384-2,,,864-0,864-1,864-2,,,1536-0,1536-1,1536-2,}, 
		\fi
		xticklabels={MU,HALS,ABPP,MU,HALS,ABPP,MU,HALS,ABPP,MU,HALS,ABPP,MU,HALS,ABPP},
	\fi
	xtick=data,
	xticklabel style={xshift=.15cm,rotate=45,anchor=east},
	\ifksweep
		xlabel={Low Rank ($k$)}, 
	\else
		xlabel={Number of Processes ($p$)},
	\fi
	xlabel style={xshift=1cm,yshift=-0.5cm},
	legend style={draw=none,row sep=-0.1cm},
	legend style={at={(1,.5)},anchor=west}
}

% grouped bar plot command
\newcommand{\makeplot}{
\begin{axis}[\plotoptions]
	\setcolors
	\addplot table[x=algo, y expr=(\thisrow{mm}/(\minvalue*\numiterations))] {\datafile};
	\addplot table[x=algo, y expr=(\thisrow{nnls}/(\minvalue*\numiterations))] {\datafile};
	\addplot table[x=algo, y expr=(\thisrow{gram}/(\minvalue*\numiterations))] {\datafile};
	\addplot table[x=algo, y expr=(\thisrow{allgather}/(\minvalue*\numiterations))] {\datafile};
	\addplot table[x=algo, y expr=(\thisrow{reducescatter}/(\minvalue*\numiterations))] {\datafile};
	\addplot table[x=algo, y expr=(\thisrow{allreduce}/(\minvalue*\numiterations))] {\datafile};
	\legend{MM, \NLS, Gram, All-Gather, Reduce-Scatter, All-Reduce};
\end{axis}
}

% labels for bar groups (manually positioned)
\newcommand{\labels}{
\node [align=center,text width=3cm] at (1.25cm, -.95cm)   {\ifksweep 10 \else 16 \fi};
\node [align=center,text width=3cm] at (3cm, -0.95cm)   {\ifksweep 20 \else 96 \fi};
\node [align=center,text width=3cm] at (4.6cm, -0.95cm)   {\ifksweep 30 \else 384 \fi};
\node [align=center,text width=3cm] at (6.3cm, -0.95cm) {\ifksweep 40 \else 864 \fi};
\node [align=center,text width=3cm] at (8.15cm, -0.95cm) {\ifksweep 50 \else 1536 \fi};
}

\newcommand{\relerrplot}{
\begin{axis}[xlabel=Iterations, ylabel=Rel.~Error for $k{=}50$,width=3in, height=2.4in]
\addplot [green,very thick] table [x={Iterations}, y={MU}] {\datafile};
\addplot [red,very thick] table [x={Iterations}, y={HALS}] {\datafile};
\addplot [blue,very thick] table [x={Iterations}, y={ANLS-BPP}] {\datafile};
\legend{MU,HALS,ABPP}
\end{axis}
}

\subsection{Relative Error over Iterations} \label{sec:convergence}

There are various metrics to compare the quality of the 
NMF algorithms \cite{kim2013nonnegative}. The most common among these metrics are (a) relative error and (b) projected 
gradient. The former represents the closeness of the low rank approximation $\hat{\AA}\approx\WW\HH$, which is generally the optimization objective. 
The latter 
represent the quality of the produced low rank factors and the stationarity of the final solution. These 
metrics are also used as the stopping criterion for terminating the iteration of the NMF algorithm as in 
line \ref{algo:nmfloop} of Algorithm \ref{alg:aunmf}. Typically a combination of the number of iterations 
along with improvement of these metrics until a tolerance is met is be used as stopping criterion. In this paper, we use 
relative error for the comparison as it is monotonically decreasing, as opposed to projected gradient of the 
low rank factors, which shows oscillations over iterations. The relative error can be formally defined as 
$\|\AA-\WW\HH\|_F/\|\AA\|_F$. 

In Figure \ref{fig:convergence}, we measure the relative error at the end of every iteration (i.e., after the updates of both $\WW$ and $\HH$) for all three algorithms \MU, \HALS, and 
\BPP.
We consider three real world datasets, \emph{video}, \emph{stack exchange} and \emph{webbase-1M}, and set $k=50$. 
We used only the number of iterations as stopping criterion and 
just for this section, ran all the algorithms for 50 iterations. 

To begin with, we explain the observations on the dense \emph{video} dataset presented in Figure 
\ref{fig:denserwerr}. The relative error of \MU\ was  highest at 0.1804 after 50 iterations and \BPP\ was 
the least with  0.1170. \HALS's relative error was 0.1208. From the figure, we can observe that \BPP\ 
error didn't change after 29 iterations where as \HALS\ and \MU\ was still improving marginally at the 
4th decimal even after 50 iterations.  

We can observe that the relative error of \emph{stack exchange} from Figure \ref{fig:stackexchangeerr} is better 
than \emph{webbase-1M} from Figure \ref{fig:sparserwerr} over all three algorithms. 
In the case of the \emph{stack exchange} dataset, the relative errors after 50 iterations follow the pattern \MU\ $>$ \HALS\ $>$ \BPP, with values 0.8480, 0.8365, and 0.8333 respectively. 
Unlike the \emph{video} dataset, both \MU\ and \HALS\ stopped improving after 23 iterations, 
where as \BPP\ was still improving in the 4th decimal even though its error was better than the others. 
However, the difference in relative error for the \emph{webbase-1M} dataset was not as significant as in the others, though the relative ordering of \MU\ $>$ \HALS\ $>$ \BPP\ was consistent, with values of 0.9703 for \MU\, 0.9697 for \HALS\ and 0.9695 for \BPP. 

In general, for these datasets \BPP\ identified better approximations than \MU\ and \HALS, which is consistent with the 
literature \cite{kim2013nonnegative,kim2011fast}. 
However, for the sparse datasets, the differences in relative error are small across the NMF algorithms. 

%The Figure \ref{fig:stackexchangerr} shows the comparison of these algorithms on the stack exchange dataset explained in Section \ref{sec:datasets}. 

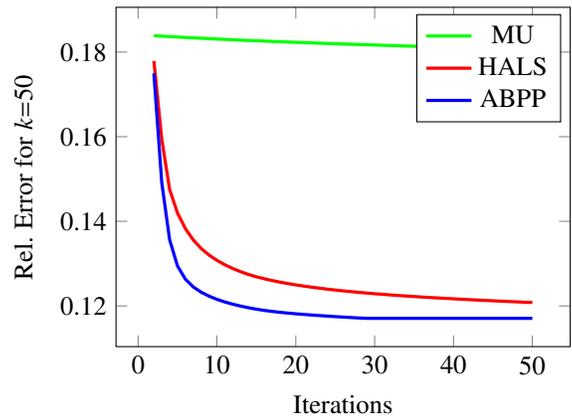
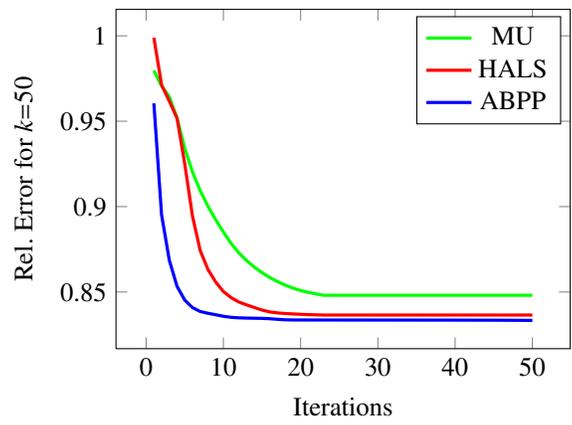
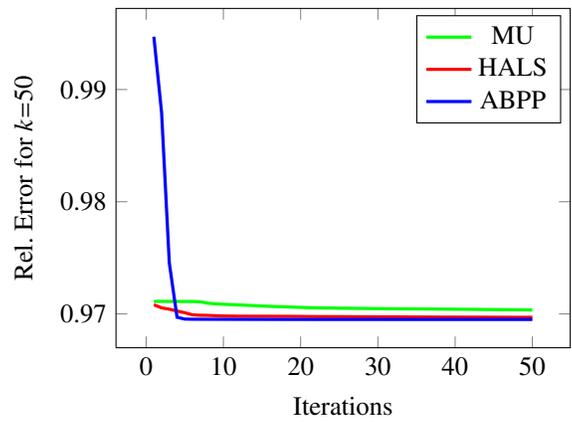
\begin{figure}

\begin{subfigure}[b]{\textwidth}
\centering
\begin{tikzpicture}
\renewcommand{\datafile}{denserwerr.dat}
\relerrplot
\end{tikzpicture}
\subcaption{Dense Real World}
\label{fig:denserwerr}
\end{subfigure}

\begin{subfigure}[b]{\textwidth}
\centering
\begin{tikzpicture}
\renewcommand{\datafile}{stackexchangeerr.dat}
\relerrplot
\end{tikzpicture}
\subcaption{Stack Exchange}
\label{fig:stackexchangeerr}
\end{subfigure}

\begin{subfigure}[b]{\textwidth}
\centering
\begin{tikzpicture}
\renewcommand{\datafile}{sparserwerr.dat}
\relerrplot
\end{tikzpicture}
\subcaption{Webbase}
\label{fig:sparserwerr}
\end{subfigure}

\caption{Relative error comparison of \MU, \HALS, \BPP\ on real world datasets}
\label{fig:convergence}
\end{figure}

\subsection{Time Per Iteration}

In this section we focus on per-iteration time of all the algorithms.
We report four types of experiments, varying the number of processors (Section \ref{sec:scaling}), the rank of the approximation (Section \ref{sec:ksweep}), the shape of the processor grid (Section \ref{sec:procgrid}), and scaling up the dataset size.
For each experiment we report a time breakdown in terms of the overall computation and communication steps (described in Section \ref{sec:perf-breakdown}) shared by all algorithms.

\subsubsection{Time Breakdown}
\label{sec:perf-breakdown}

To differentiate the computation and communication costs among the algorithms, we present the time breakdown among the various tasks within the algorithms for all performance experiments.
For Algorithm \ref{alg:2D}, there are three local computation tasks and three communication tasks to compute each of the factor matrices:
\begin{itemize}
	\item \textbf{MM}, computing a matrix multiplication with the local data matrix and one of the factor matrices;
	\item \textbf{\NLS}, local updates either using \BPP\ or applying the remaining work of the \MU\ or \HALS\ updates (i.e., the total time for both $UpdateW$ and $UpdateH$ functions); 
	\item \textbf{Gram}, computing the local contribution to the Gram matrix;
	\item \textbf{All-Gather}, to compute the global matrix multiplication;
	\item \textbf{Reduce-Scatter}, to compute the global matrix multiplication;
	\item \textbf{All-Reduce}, to compute the global Gram matrix.
\end{itemize}
In our results, we do not distinguish the costs of these tasks for $\WW$ and $\HH$ separately; we report their sum, though we note that we do not always expect balance between the two contributions for each task.
Algorithm \ref{alg:naive} performs all of these tasks except Reduce-Scatter and All-Reduce; all of its communication is in All-Gather.

\subsubsection{Scaling \texorpdfstring{$p$}{p}: Strong Scaling}
\label{sec:scaling}

Figure \ref{fig:scaling} presents a strong scaling experiment with four data sets: \emph{sparse synthetic}, \emph{dense synthetic}, \emph{webbase-1M}, and \emph{video}.
In this experiment, for each data set and algorithm, we use low rank $k=50$ and vary the number of processors (with fixed problem size).
We use $\{1,6,24,54,96\}$ nodes; since each node has 16 cores, this corresponds to $\{16,96,384,864,1536\}$ cores and report average per-iteration times.

We highlight three main observations from these experiments:
\begin{enumerate}
	\item \label{obs:1} \Naive\ is slower than all other algorithms for large $p$;
	\item \label{obs:2} \MU, \HALS, and \BPP\ (algorithms based on \ParNMF) scale up to over 1000 processors;
	\item \label{obs:3} the relative per-iteration cost of \NLS\ decreases as $p$ increases (for all algorithms), and therefore the extra per-iteration cost of \BPP\ (compared with \MU\ and \HALS) becomes negligible.
\end{enumerate}

\paragraph{Observation \ref{obs:1}} 
We report \Naive\ performance only for the synthetic data sets (Figures \ref{fig:sparsesynstrongscale} and \ref{fig:densesynscaling}); the results for the real-world data sets are similar.
For the Sparse Synthetic data set, \Naive\ is $4.2\times$ slower than the fastest algorithm (\BPP) on 1536 processors; for the Dense Synthetic data set, \Naive\ is $1.6\times$ slower than the fastest algorithm (\MU) at that scale.
Nearly all of this slowdown is due to the communication costs of \Naive.
Theoretical and practical evidence supporting the first observation is also reported in our previous paper \cite{KBP16}.
However, we also note that \Naive\ is the fastest algorithm for the smallest $p$ for each problem, which is largely due to reduced MM time.
Each algorithm performs exactly the same number of flops per MM; the efficiency of \Naive\ for small $p$ is due to cache effects.
For example, for the Dense Synthetic problem on 96 processors, the output matrix of \Naive's MM fits in L2 cache, but the output matrix of \ParNMF's MM does not; these effects disappear as the $p$ increases.

\paragraph{Observation \ref{obs:2}}
Algorithms based on \ParNMF\ (\MU, \HALS, \BPP) scale well, up to over 1000 processors.
All algorithms' run times decrease as $p$ increases, with the exception of the Sparse Real World data set, in which case all algorithms slow down scaling from $p=864$ to $p=1536$ (we attribute this lack of scaling to load imbalance).
For sparse problems, comparing $p=16$ to $p=1536$ (a factor increase of 96), we observe speedups from \BPP\ of $59\times$ (synthetic) and $22\times$ (real world).
For dense problems, comparing $p=96$ to $p=1536$ (a factor increase of 16), \BPP's speedup is $12\times$ for both problems.
\MU\ and \HALS\ demonstrate similar scaling results.
For comparison, speedups for \Naive\ were $8\times$ and $3\times$ (sparse) and $6\times$ and $4\times$ (dense).

\paragraph{Observation \ref{obs:3}}
\MU, \HALS, and \BPP\ share all the same subroutines except those that are characterized as \NLS.
Considering only \NLS subroutines, \MU\ and \HALS\ require fewer operations than \BPP. However, \HALS\ has to make one additional communication for normalization of $\WW$. 
For small $p$, these cost differences are apparent in Figure \ref{fig:scaling}.
For example, for the sparse real world data set on 16 processors, \BPP's \NLS time is $16\times$ that of $\MU$, and the per iteration time differs by a factor of $4.5$.
However, as $p$ increases, the relative time spent in \NLS computations decreases, so the extra time taken by \BPP\ has less of an effect on the total per iteration time.
By contrast, for the dense real world data set on 1536 processors, \BPP\ spends a factor of 27 times more time in \NLS than \MU\ but only $11\%$ longer over the entire iteration.
For the synthetic data sets, \NLS takes $24\%$ (sparse) on 16 processors and $84\%$ (dense) on 96 processors, and that percentage drops to $11\%$ (sparse) and $15\%$ (dense) on 1536 processors.

These trends can also be seen theoretically (Table \ref{tab:costs}).
We expect local computations like MM, \NLS, and Gram to scale like $1/p$, assuming load balance is preserved.
If communication costs are dominated by the number of words being communicated (i.e., the communication is bandwidth bound), then we expect time spent in communication to scale like $1/\sqrt p$, and at least for dense problems, this scaling is the best possible.
Thus, communication costs will eventually dominate computation costs for all NMF problems, for sufficiently large $p$.
(Note that if communication is latency bound and proportional to the number of messages, then time spent communicating actually increases with $p$.)

The overall conclusion from this empirical and theoretical observation is that the extra per-iteration cost of \BPP\ over alternatives like \MU\ and \HALS\ decreases as the number of processors $p$ increases.
As shown in Section \ref{sec:convergence} the faster error reduction of \BPP\ typically reduces the overall time to solution compared with the alternatives even it requires more time for each iteration.
Our conclusion is that as we scale up $p$, this tradeoff is further relaxed so that \BPP\ becomes more and more advantageous for both quality and performance.

\begin{figure}

\naivetrue
\ksweepfalse

\begin{subfigure}[b]{\textwidth}
\centering
\begin{tikzpicture}
\renewcommand{\datafile}{sparsesynstrong_scale_pgf.dat}
\makeplot
\labels
\end{tikzpicture}
\subcaption{Sparse Synthetic}
\label{fig:sparsesynstrongscale}
\end{subfigure}

\begin{subfigure}[b]{\textwidth}
\centering
\begin{tikzpicture}
\renewcommand{\datafile}{densesynstrong_scale_pgf.dat}
\makeplot
\labels
\end{tikzpicture}
\subcaption{Dense Synthetic}
\label{fig:densesynscaling}
\end{subfigure}

\naivefalse

\begin{subfigure}[b]{\textwidth}
\centering
\renewcommand{\datafile}{sparserwstrong_scale_pgf.dat}
\begin{tikzpicture}
\makeplot
\labels
\end{tikzpicture}
\subcaption{Sparse Real World (webbase-1M)}
\label{fig:sparserwscaling}
\end{subfigure}

\begin{subfigure}[b]{\textwidth}
\centering
\begin{center}
\renewcommand{\datafile}{denserwstrong_scale_pgf.dat}
\begin{tikzpicture}
\makeplot
\labels
\end{tikzpicture}
\subcaption{Dense Real World (Video)}
\label{fig:denserwscaling}
\end{center}
\end{subfigure}

\caption{Strong scaling (varying $p$) with $k=50$ benchmarking per-iteration times.}
\label{fig:scaling}
\end{figure}
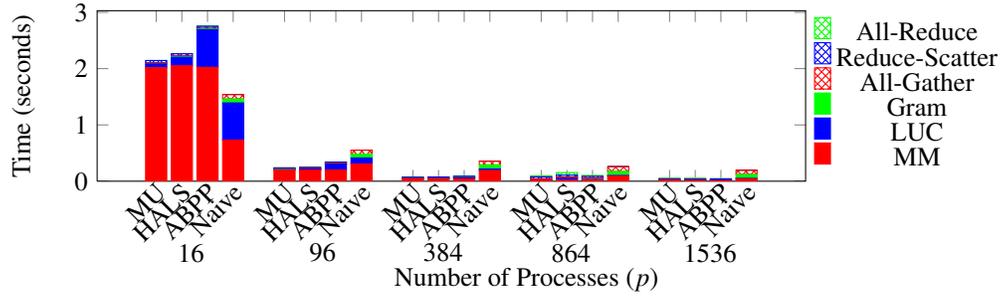
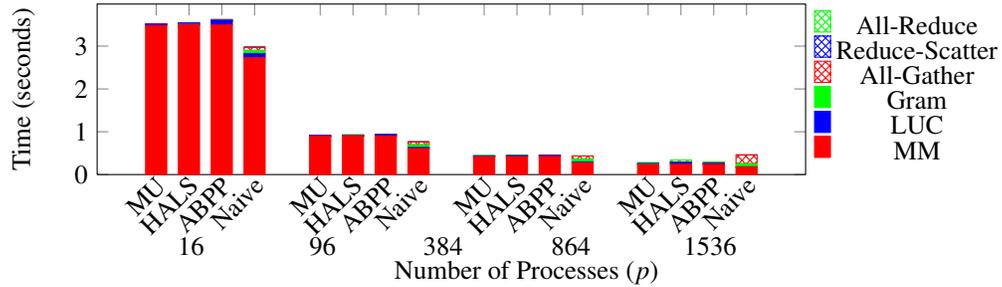
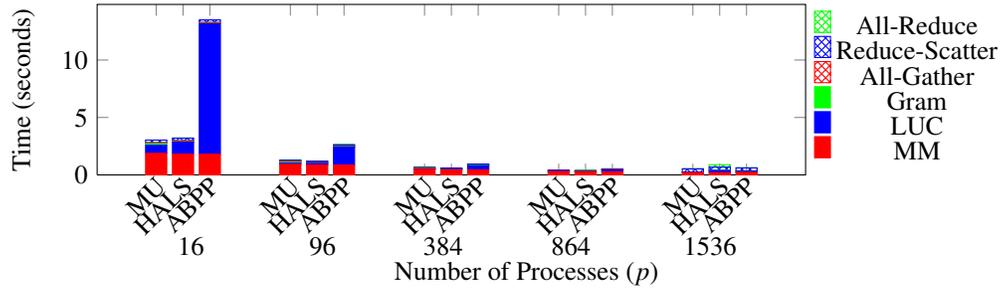
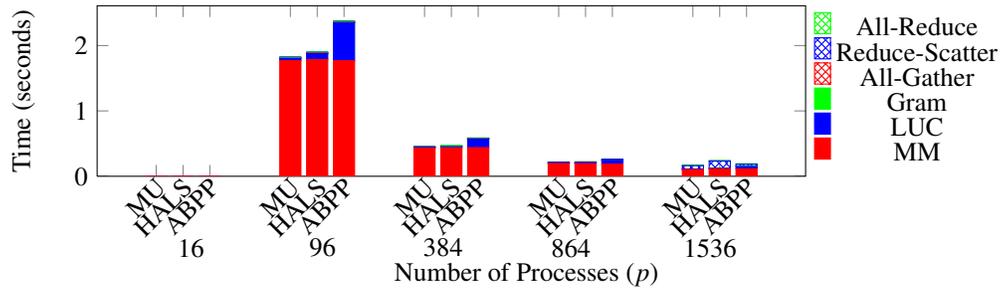

\subsubsection{Scaling \texorpdfstring{$k$}{k}}
\label{sec:ksweep}

Figure \ref{fig:ksweep} presents an experiment scaling up the low rank value $k$ from 10 to 50 with each of the four data sets.
In this experiment, for each data set and algorithm, the problem size is fixed and the number of processors is fixed to $p=864$.
As in Section \ref{sec:scaling}, we report the average per-iteration times.
We also omit \Naive\ data for the real world data sets to highlight the comparisons among \MU, \HALS, and \BPP.

We highlight two observations from these experiments:
\begin{enumerate}
	\item \label{obs:a} \Naive\ is plagued by communication time that increases linearly with $k$;
	\item \label{obs:b} \BPP's time increases more quickly with $k$ than those of \MU\ or \HALS;
\end{enumerate}

\paragraph{Observation \ref{obs:a}} 
We see from the synthetic data sets (Figures \ref{fig:sparsesynksweep} and \ref{fig:densesynksweep}) that the overall time of \Naive\ increases more rapidly with $k$ than any other algorithm and that the increase in time is due mainly to communication (All-Gather).
Table \ref{tab:costs} predicts that \Naive\ communication volume scales linearly with $k$, and we see that in practice the prediction is almost perfect with the synthetic problems.
This confirms that the communication is dominated by bandwidth costs and not latency costs (which are constant with respect to $k$).
We note that the communication cost of \ParNMF\ scales like $\sqrt k$, which is why we don't see as dramatic an increase in communication time for \MU, \HALS, or \BPP in Figure \ref{fig:ksweep}.

\paragraph{Observation \ref{obs:b}}
 Focusing attention on time spent in \NLS computations, we can compare how \MU, \HALS, and \BPP\ scale differently with $k$.
 We see a more rapid increase of \NLS time for \BPP\ than \MU\ or \HALS; this is expected because the \NLS computations unique to \BPP\ require between $O(k^3)$ and $O(k^4)$ operations (depending on the data) while the unique \NLS computations for \MU\ and \HALS\ are $O(k^2)$, with all other parameters fixed.
%It is generally observed that in both dense and sparse datasets, higher the low rank $k$ lesser the error \cite{kim2011fast}. For eg., the relative error of NMF algorithm with low rank $k=10$ will be more than the low rank $k=50$. 
Thus, the extra per-iteration cost of \BPP\ increases with $k$, so the advantage of \BPP\ of better error reduction must also increase with $k$ for it to remain superior at large values of $k$.
%\grey{can we confirm this in convergence section?}
We also note that although the number of operations within MM is $O(k)$, we do not observe much increase in time from $k=10$ to $k=50$; this is due to the improved efficiency of local MM for larger values of $k$.

\begin{figure}

\naivetrue
\ksweeptrue

\begin{subfigure}[b]{\textwidth}
\centering
\renewcommand{\datafile}{sparsesyn864_ksweep_scale_pgf.dat}
\begin{tikzpicture}
\makeplot
\labels
\end{tikzpicture}
\subcaption{Sparse Synthetic}
\label{fig:sparsesynksweep}
\end{subfigure}

\begin{subfigure}[b]{\textwidth}
\centering
\renewcommand{\datafile}{densesyn864_ksweep_scale_pgf.dat}
\begin{tikzpicture}
\makeplot
\labels
\end{tikzpicture}
\subcaption{Dense Synthetic}
\label{fig:densesynksweep}
\end{subfigure}

\naivefalse

\begin{subfigure}[b]{\textwidth}
\centering
\renewcommand{\datafile}{sparserwksweep_scale_pgf.dat}
\begin{tikzpicture}
\makeplot
\labels
\end{tikzpicture}
\subcaption{Sparse Real World (webbase-1M)}
\label{fig:sparserwksweep}
\end{subfigure}

\begin{subfigure}[b]{\textwidth}
\centering
\renewcommand{\datafile}{denserw_ksweep_scale_pgf.dat}
\begin{tikzpicture}
\makeplot
\labels
\end{tikzpicture}
\caption{Dense Real World (Video)}
\label{fig:denserwksweep}
\end{subfigure}

\caption{Varying low rank $k$ with $p=864$, benchmarking per-iteration times.}
\label{fig:ksweep}
\end{figure}
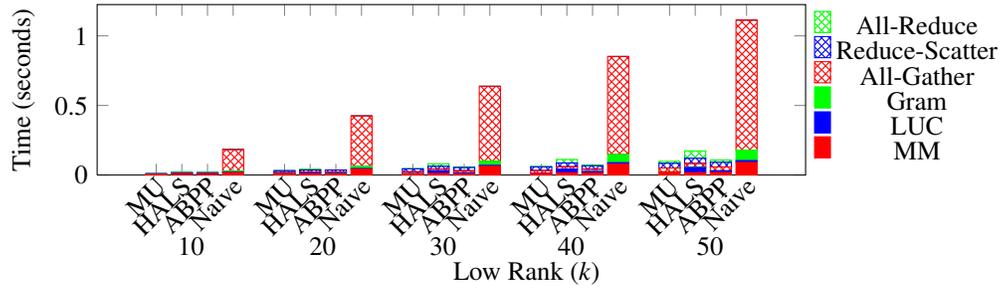
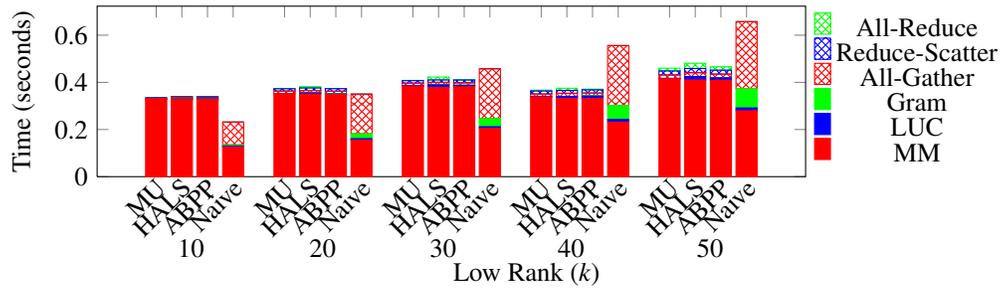
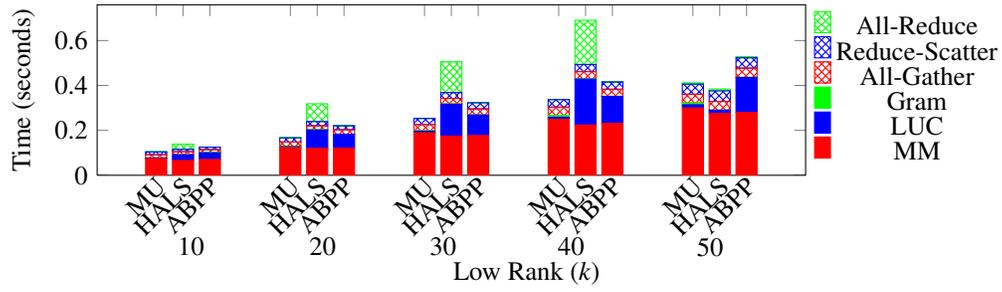
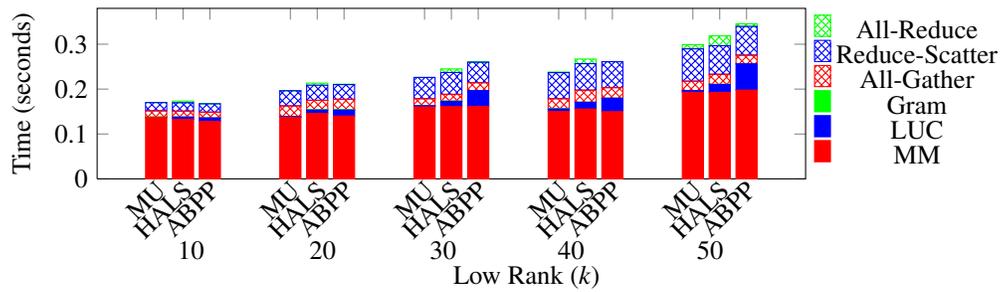

\subsubsection{Varying Processor Grid}
\label{sec:procgrid}

In this section we demonstrate the effect of the dimensions of the processor grid on per-iteration performance.
%As described in Section \ref{sec:alg:comm}, Algorithm \ref{alg:2D} is correct for any $p_r$ and $p_c$, but 
For a fixed total number of processors $p$, the communication cost of Algorithm \ref{alg:2D} varies with the choice of $p_r$ and $p_c$.
To minimize the amount of data communicated, the theoretical analysis suggests that the processor grid should be chosen to make the sizes of the local data matrix as square as possible.
This implies that if $m/p > n$, $p_r=p$ and $p_c=1$ is the optimal choice (a 1D processor grid); likewise if $n/p > m$ then a 1D processor grid with $p_r=1$ and $p_c=p$ is the optimal choice.
Otherwise, a 2D processor grid minimizes communication with $p_r \approx \sqrt{mp/n}$ and $p_c \approx \sqrt{np/m}$ (subject to integrality and $p_rp_c=p$).

Figure \ref{fig:procsweep} presents a benchmark of \BPP\ for the Sparse Synthetic data set for fixed values of $p$ and $k$.
We vary the processor grid dimensions from both 1D grids to the 2D grid that matches the theoretical optimum exactly.
Because the sizes of the Sparse Synthetic matrix are $172{,}800\times115{,}200$ and the number of processors is 1536, the theoretically optimal grid is $p_r = \sqrt{mp/n} = 48$ and $p_c = \sqrt{np/m} = 32$.
The experimental results confirm that this processor grid is optimal, and we see that the time spent communicating increases as the processor grid deviates from the optimum, with the 1D grids performing the worst.

\begin{figure}
\renewcommand{\datafile}{sparsesynpsweep_scale_pgf.dat}
\centering
\begin{tikzpicture}
\begin{axis}[
	ybar stacked,
	reverse legend,
	bar width=12pt,
	width=8cm, height=5cm,
	ylabel={Time (seconds)}, 
	x label style={yshift=-.5cm},
	ymin=0,
	symbolic x coords={1-1536,8-192,16-96,32-48,48-32,96-16,192-8,1536-1},
	xticklabels={$1{\times}1536$,$8{\times}192$,$16{\times}96$,$32{\times}48$,$48{\times}32$,$96{\times}16$,$192{\times}8$,$1536{\times}1$},
	xtick=data,
	xticklabel style={xshift=.15cm,rotate=45,anchor=east},
	xlabel={Processor Grid}, 
	legend style={draw=none,row sep=-0.1cm},
	legend style={at={(1,.5)},anchor=west}
]
	\setcolors
	\addplot table[x=pg, y expr=(\thisrow{mm}/(\minvalue*\numiterations))] {\datafile};
	\addplot table[x=pg, y expr=(\thisrow{nnls}/(\minvalue*\numiterations))] {\datafile};
	\addplot table[x=pg, y expr=(\thisrow{gram}/(\minvalue*\numiterations))] {\datafile};
	\addplot table[x=pg, y expr=(\thisrow{allgather}/(\minvalue*\numiterations))] {\datafile};
	\addplot table[x=pg, y expr=(\thisrow{reducescatter}/(\minvalue*\numiterations))] {\datafile};
	\addplot table[x=pg, y expr=(\thisrow{allreduce}/(\minvalue*\numiterations))] {\datafile};
	\legend{MM, \NLS, Gram, All-Gather, Reduce-Scatter, All-Reduce};
\end{axis}
\end{tikzpicture}
\caption{Tuning processor grid for \BPP\ on Sparse Synthetic data set with $p=1536$ and $k=50$.}
\label{fig:procsweep}
\end{figure}
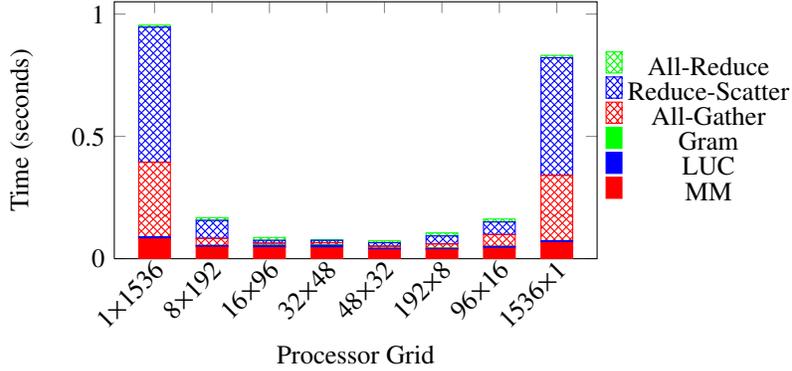

\subsubsection{Scaling up to Very Large Sparse Datasets}\label{sec:webbase-2001}

In this section, we test \ParNMF\ by scaling up the problem size.
While we've used \emph{webbase-1M} in previous experiments, we consider \emph{webbase-2001} in this section as it is the largest sparse data in University of Florida Sparse Matrix Collection \cite{DH11}.
The former dataset has about 1 million nodes and 3 million edges, whereas the latter dataset has over 100 million nodes and  1 billion edges (see Section \ref{sec:datasets} for more details).
Not only is the size of the input matrix increased by two orders of magnitude (because of the increase in the number of edges), but also the size of the output matrices is increased by two orders of magnitude (because of the increase in the number of nodes).

In fact, with a low rank of $k=50$, the size of the output matrices dominates that of the input matrix: $\WW$ and $\HH$ together require a total of 88 GB, while $\AA$ (stored in compressed column format) is only 16 GB.
At this scale, because each node (consisting of 16 cores) of Rhea has 128 GB of memory, multiple nodes are required to store the input and output matrices with room for other intermediate values.
As mentioned in Section \ref{sec:memory}, \ParNMF\ requires considerably more temporary memory than necessary when the output matrices require more memory than the input matrix.
While we were not limited by this property for the other sparse matrices, the \emph{webbase-2001} matrix dimensions are so large that we need the memories of tens of nodes to run the algorithm.
Thus, we report results only for the largest number of processors in our experiments: 1536 processors (96 nodes).
The extra temporary memory used by \ParNMF\ is a latency-minimizing optimization; the algorithm can be updated to avoid this extra memory cost using a blocked matrix multiplication algorithm.
The extra memory can be reduced to a negligible amount at the expense of more messages between processors and synchronizations across the parallel machine.
We have not yet implemented this update.

%In the real-world, it is difficult to acquire very large dense datasets. Most of the very large dense 
%datasets will be either be image/video data or collected out of very large scientific experiments. It is 
%common in the DM/ML community to have very large sparse dataset. In this section, to benefit the Data 
%mining and machine learning community, we report the scalability of \ParNMF\ on very directed sparse 
%webgraph over 118 million nodes and nearly 1 billion edges.  In Section \ref{sec:datasets},  we 
%discussed the details of this {\em webbase-2001} dataset. 

%It can be observed from our \ParNMF\ that we hold the dense matrices $\WW, \HH, (\WW_i)_j, 
%(\HH_j)_i, \WW^T\AA$ and $\AA \HH^T$ in memory for computation. As discussed in Section 
%{\ref{sec:memory}}, the estimated per process memory footprint of all the matrices together is $O(xx)$. In the 
%case of {\em Webbase-2001},  matrix of size 118,000,000 with low rank $k=$50 over 864 process, every 
%process  required xx GB and approximately 2GB for input sparse matrix. \ramki{Grey can we discuss to fill this up. I wanted this to be consistent with the table and the section.} We were running 16 process on every node, the estimated memory per-process along with temporary allocations,  the experiment couldn't run in 54 nodes with each node having 128GB. Hence, we are reporting the wall clock time for different tasks and the error plot for {\em Webbase-2001} over 96 nodes and 1536 processors in Figure \ref{fig:webbase2001}. 

We present results for \emph{webbase-2001} in Figure \ref{fig:webbase2001}.
The timing results are consistent with the observations from other synthetic and real world sparse datasets as discussed in Section \ref{sec:scaling}, though the raw times are about 2 orders of magnitude larger, as expected. 
In the case of the error plot, as observed in other experiments, \BPP\ outperforms other algorithms; however  we see that \MU\ reduces error at a faster rate than \HALS\ in the first 30 iterations. 
At the 30th iteration, the error for \HALS\ was still improving at the third decimal, whereas \MU's was improving at the fourth decimal. 
We suspect that over a greater number of iterations the error of \HALS\ could become smaller than that of \MU, which would be more consistent with other datasets. 

\begin{figure}
\begin{subfigure}[t]{.5\textwidth}
%\pgfplotstabletypeset[
%         font=\small,
%	col sep=tab,
%	every head row/.style={before row=\hline,after row=\hline},	% style the first row
%	every last row/.style={after row=\hline},	% style the last row
%	every first column/.style={column type/.add={|}{}},	% style the first column
%	every last column/.style={column type/.add={}{|}},
%	columns/Tasks/.style={string type},
%	columns/MU/.style={string type},
%	columns/HALS/.style={string type},
%	columns/ABPP/.style={column name=\texttt{ANLS/BPP}, string type}
%	]{data/webbase118m_pgf.dat}

\renewcommand{\datafile}{webbase118m_pgf.dat}
\centering
\begin{tikzpicture}
\begin{axis}[
	ybar stacked,
	reverse legend,
	bar width=16pt,
	width=5cm, height=5cm,
	ylabel={Time (seconds)}, 
	x label style={yshift=-.5cm},
	ymin=0,
	symbolic x coords={1536-0,1536-1,1536-2},
	xticklabels={MU,HALS,ABPP},
	xtick=data,
	legend style={draw=none,row sep=-0.1cm},
	legend style={at={(1,.5)},anchor=west}
]
	\setcolors
	\addplot table[x=algo, y expr=(\thisrow{mm})] {\datafile};
	\addplot table[x=algo, y expr=(\thisrow{nnls})] {\datafile};
	\addplot table[x=algo, y expr=(\thisrow{gram})] {\datafile};
	\addplot table[x=algo, y expr=(\thisrow{allgather})] {\datafile};
	\addplot table[x=algo, y expr=(\thisrow{reducescatter})] {\datafile};
	\addplot table[x=algo, y expr=(\thisrow{allreduce})] {\datafile};
	\legend{MM, \NLS, Gram, All-Gather, Reduce-Scatter, All-Reduce};
\end{axis}
\end{tikzpicture}
\subcaption{Time}
\end{subfigure}
\begin{subfigure}[t]{.5\textwidth}
    \centering
\begin{tikzpicture}%[scale=0.8, every node/.style={scale=0.8}]
\renewcommand{\datafile}{webbase118merr.dat}
\begin{axis}[xlabel=Iterations, ylabel=Rel.~Error for $k{=}50$,width=5cm, height=5cm]
\addplot [green,very thick] table [x={Iterations}, y={MU}] {\datafile};
\addplot [red,very thick] table [x={Iterations}, y={HALS}] {\datafile};
\addplot [blue,very thick] table [x={Iterations}, y={ANLS-BPP}] {\datafile};
\legend{MU,HALS,ABPP}
\end{axis}
\end{tikzpicture}
\subcaption{Error}
\end{subfigure}
\caption{NMF comparison on \emph{webbase-2001} for $k{=}50$ on 1536 processors.}
\label{fig:webbase2001}
\end{figure}
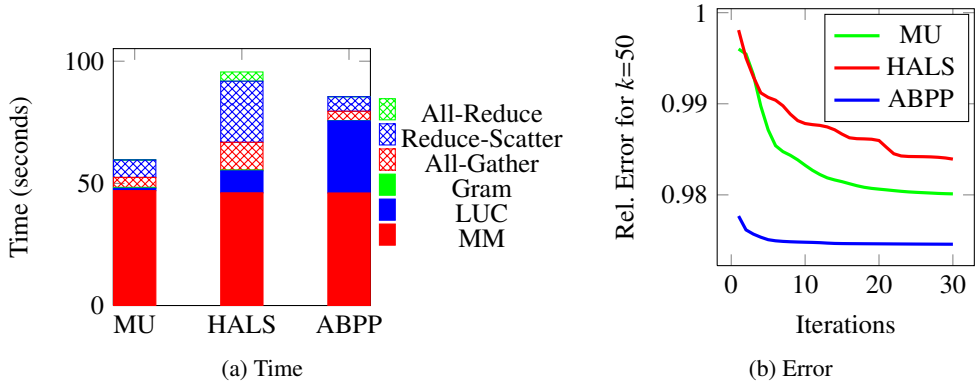

\subsection{Interpretation of Results}

In this section, we present results from two of the real world datasets. 
The first example shows an image processing example of background separation and moving object detection in surveillance video data, and the second example shows topic modeling output on the \emph{stack exchange} text dataset. The details of these datasets are presented in Section \ref{sec:datasets}. 
While the literature covers more detail about fine tuning NMF and different NMF variants for higher quality results on these two tasks  \cite{ZT2011,B2014,AGHKT2014,KJJCH2015}, our main focus is to show how quickly we can produce a baseline NMF output and its real world interpretation. 

\subsubsection{Moving Object Detection of Surveillance Video Data}

As explained in the Section \ref{sec:datasets}, we processed 12 minutes video that is captured from a 
busy junction in Georgia Tech to separate the background and moving objects from this video. 
In Figure \ref{fig:videoresults} we present some sample frames to compare the input image with the separated background and moving objects.
The background are the results of the low rank approximation 
$\hat{\AA}=\WW\HH$ output yielded from our \ParNMF\ algorithm and the moving objects are given by $\AA-\hat{\AA}$. 
We can clearly see the background remains static and the moving objects (e.g., cars) are visible. 
%There are sufficient literature that utilizes the background $\hat{\AA}$ and produces a high  quality moving objects from the input image $\AA$, which is not the focus of this paper. 

\newcommand{\wdth}{1.65in}
\newcommand{\hght}{1.0725in}
\begin{figure}
\centering
\begin{tabular}{ccc}
Input Frame($\AA$) & Background ($\WW\HH$) & Moving Object $\AA-\WW\HH$ \\
\includegraphics[height=\hght,width=\wdth]{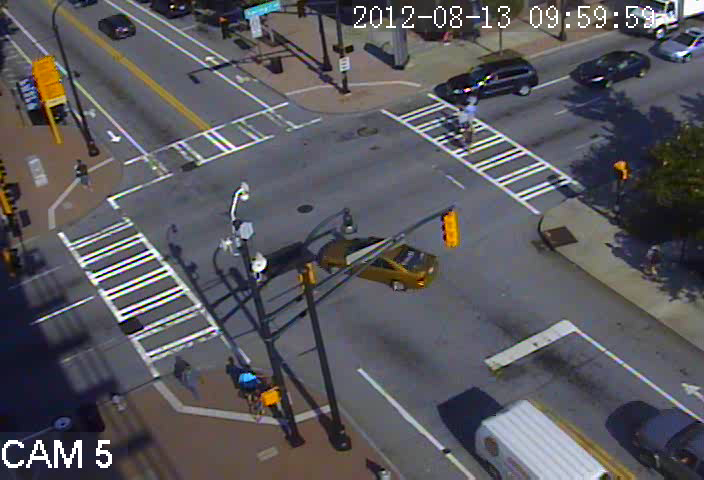} &  
\includegraphics[height=\hght,width=\wdth]{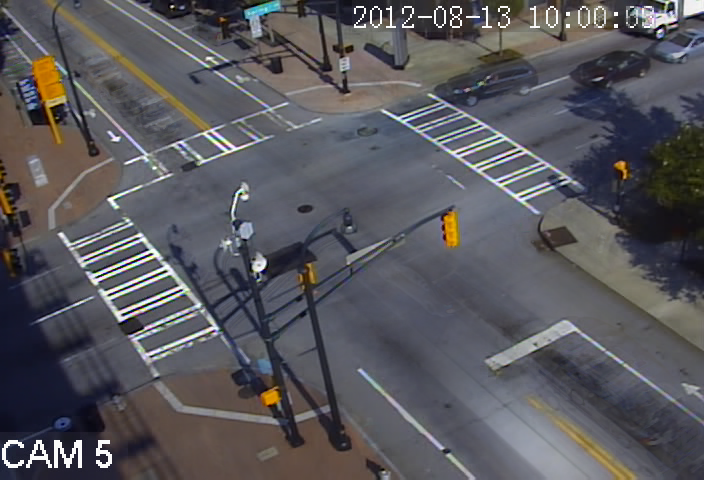}&
\includegraphics[height=\hght,width=\wdth]{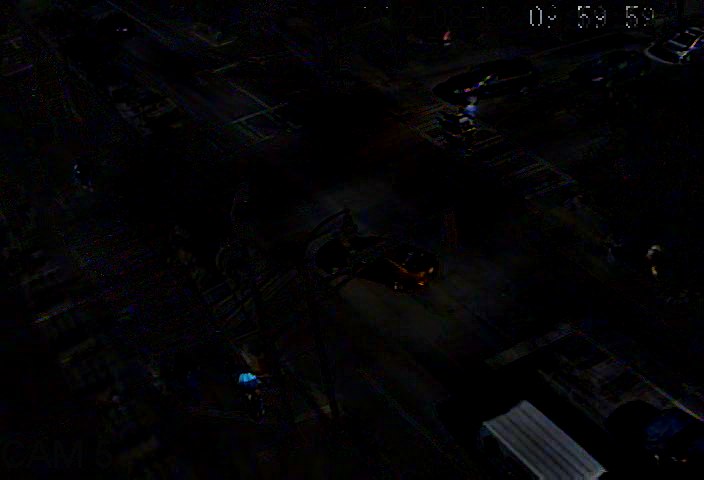} \\
\includegraphics[height=\hght,width=\wdth]{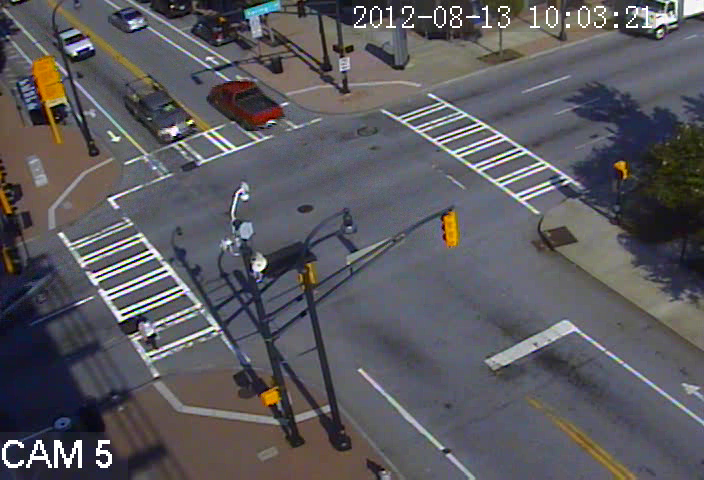} &  
\includegraphics[height=\hght,width=\wdth]{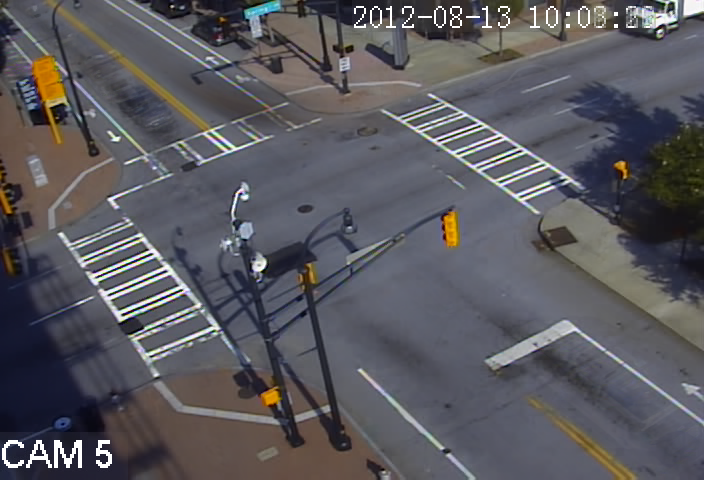}&
\includegraphics[height=\hght,width=\wdth]{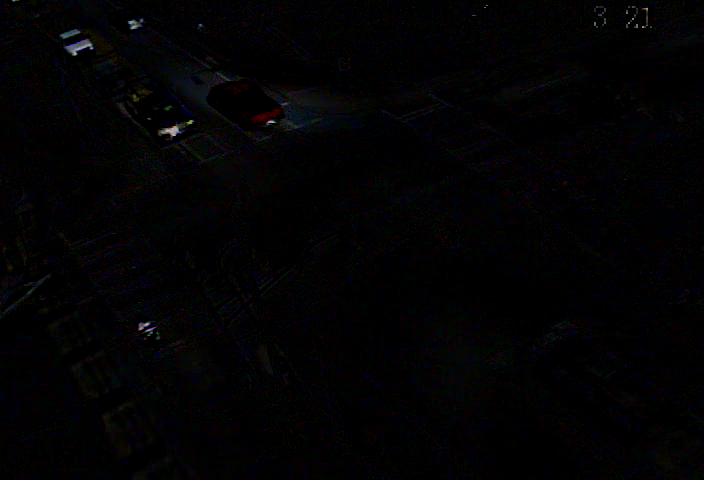} \\
\includegraphics[height=\hght,width=\wdth]{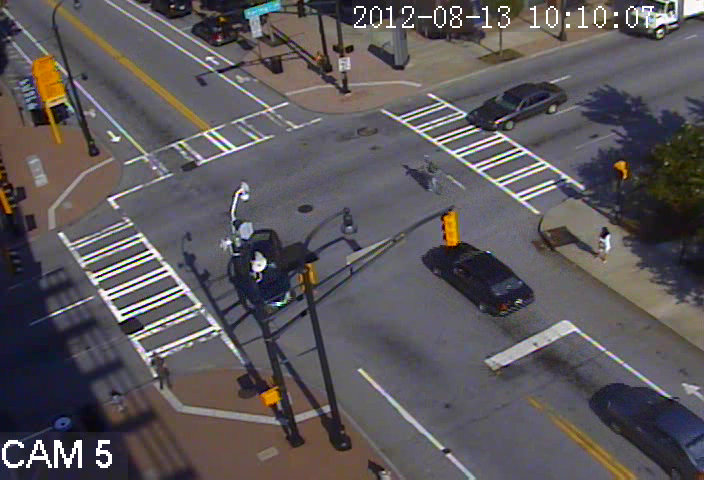} &  
\includegraphics[height=\hght,width=\wdth]{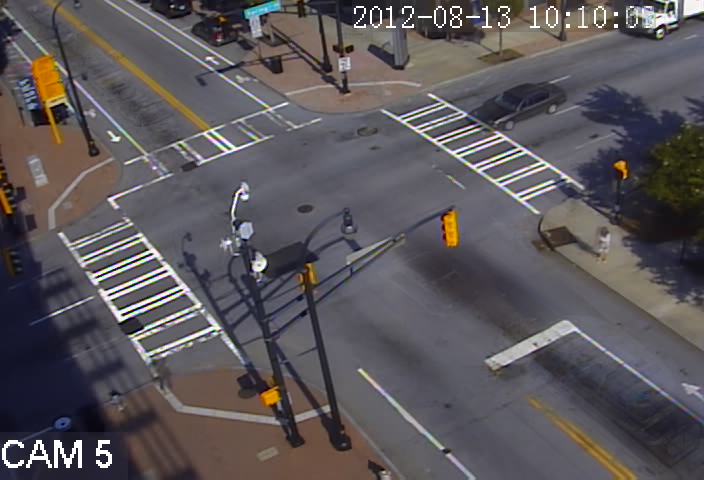}&
\includegraphics[height=\hght,width=\wdth]{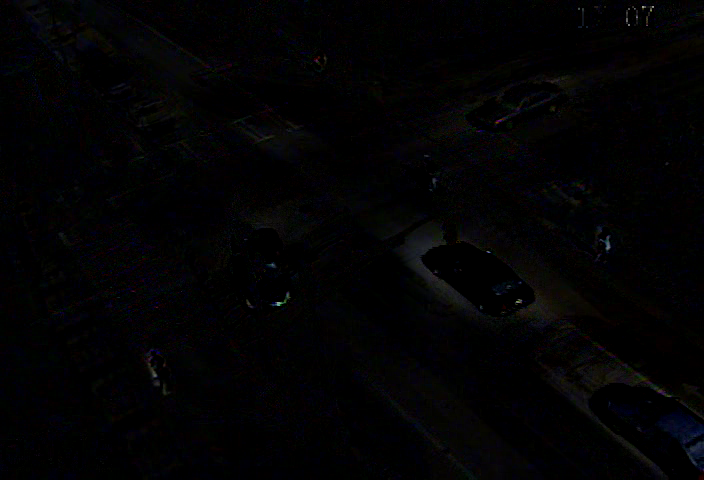} \\
\end{tabular}
\caption{Moving object detection for video data using NMF.  Each row of images corresponds to a particular frame in the video.  The left column is the original frame, the middle column is the reconstructed frame from the low-rank approximation (which captures the background), and the right column is the difference (which captures the moving objects).}
\label{fig:videoresults}
\end{figure}

\subsubsection{Topic Modeling of Stack Exchange Data}

We downloaded the latest Stack Overflow  dump from its archive on 28-Jul-2016. The details of 
the preprocessing and the sparse matrix generation are explained in Section \ref{sec:datasets}. We ran 
our \ParNMF\ algorithm on this dataset, which has nearly 12 million questions from the Stack Overflow site (under Stack Exchange) to produce 50 topics. 
The matrix $\WW$ can be interpreted as {\em vocabulary-topic} distribution and the 
$\HH$ as {\em topic-document} distribution.  We took the top 5 words for each of the 50 topics and 
present them in Table \ref{tab:stackexchangetopics}. Typically a good topic generation satisfies 
properties such as (a) finding discriminative rather than common words -- capturing words that can provide 
some information; (b) finding different topics -- the similarity between different topics should be low; 
(c) coherence - all the words that belong to one topic should be coherent.  There are some topic quality 
metrics \cite{NLGB2010} that capture the usefulness of topic generation algorithm.  We can  see
 NMF generated generally high-quality and coherent topics. 
 Also, each of the topics are from different domains such as 
 databases, C/C++ programming, Java programming, and web technologies like PHP and HTML.

\begin{table}
\begin{center}
\footnotesize
\begin {tabular}{|ccccc||>{\columncolor [gray]{.8}}c>{\columncolor [gray]{.8}}c>{\columncolor [gray]{.8}}c>{\columncolor [gray]{.8}}c>{\columncolor [gray]{.8}}c|}%
\toprule 
\multicolumn{5}{|c||}{Top Keywords from Topics 1-25} & \multicolumn{5}{|>{\columncolor [gray]{0.8}}c|}{Top Keywords from Topics 26-50} \\ 
word1&word2&word3&word4&word5&word1&word2&word3&word4&word5\\\midrule %
refer&undefin&const&key&compil&echo&type=text&php&form&result\\%
text&field&box&word&static&test&perform&fail&unit&result\\%
imag&src&descript&alt=ent&size&tabl&key&queri&databas&insert\\%
button&click&event&form&add&user&email&usernam&login&log\\%
creat&bean&add&databas&except&data&json&store&read&databas\\%
string&static&final&catch&url&page&load&content&url&link\\%
width&height&color&left&display&privat&static&final&import&float\\%
app&applic&servic&thread&work&row&column&date&cell&valu\\%
ipsum&lorem&dolor&sit&amet&line&import&command&print&recent\\%
node&list&root&err&element&var&map&marker&match&url\\%
0x00&0xff&byte&0x01&0xc0&server&connect&client&messag&request\\%
file&directori&read&open&upload&number&byte&size&print&input\\%
function&call&event&work&variabl&object&properti&json&instanc&list\\%
int&char&const&static&doubl&array&element&valu&key&index\\%
public&overrid&virtual&static&extend&main&thread&program&frame&cout\\%
return&param&result&def&boolean&type&field&properti&argument&resolv\\%
info&thread&start&map&servic&select&item&queri&join&list\\%
error&syntax&found&symbol&fail&sourc&target&except&java&fail\\%
set&properti&virtual&default&updat&instal&version&packag&err&default\\%
case&break&switch&default&cout&code&work&problem&chang&write\\%
method&call&except&static&todo&void&overrid&protect&catch&extend\\%
href&nofollow&src&link&work&true&requir&boolean&option&valid\\%
end&def&dim&begin&properti&find&project&import&warn&referenc\\%
debug&request&filter&match&found&view&control&item&overrid&posit\\%
fals&boolean&fix&bool&autoincr&null&default&key&int(11&primari\\\bottomrule %
\end {tabular}%

\normalsize
\end{center}
\caption{Top 5 words of 50 topics from {\em Stack Exchange} data set.}
%\grey{Maybe make left 5 columns white and right 5 gray?  It's hard to see that there are two columns to connect the 5 related words.}
\label{tab:stackexchangetopics}
\end{table}%
\section{Conclusion}\label{sec:conclusion}

In this paper, we propose a high-performance distributed-memory parallel framework for NMF algorithms that iteratively update the low rank factors in an alternating fashion.   
Our parallel algorithm is designed to avoid communication overheads and scales well to over 1500 cores. 
The framework is flexible, being (a) expressive enough to leverage many different NMF algorithms and (b) efficient for both sparse and dense matrices of sizes that span from a few hundreds to hundreds of millions.  
Our open-source software implementation is available for download.

%For the data sets on which we experimented, we showed that an efficient implementation of a naive parallel algorithm spends much of its time in interprocessor communication.
%For sparse problems, comparing p = 16 to p = 1536 (a factor increase of 96), we observe a speedups from \BPP\ of 59× (synthetic) and 22× (real world). For dense problems, comparing p = 96 to p = 1536 (a factor increase of 16), \BPP's speedup is 12× for both problems. The other NMF algorithms -- \MU\ and \HALS\, demonstrate similar scaling results in our framework.
For solving data mining problems at today's scale, parallel computation and distributed-memory systems are becoming prerequisites.
We argue in this paper that by using techniques from high-performance computing, the computations for NMF can be performed very efficiently.
Our framework allows for the HPC techniques (efficient matrix multiplication) to be separated from the data mining techniques (choice of NMF algorithm), and we compare data mining techniques at large scale, in terms of data sizes and number of processors.
One conclusion we draw from the empirical and theoretical observations is that the extra per-iteration cost of \BPP\ over alternatives like \MU\ and \HALS\ decreases as the number of processors $p$ increases, making \BPP\ more advantageous in terms of both quality and performance at larger scales.
By reporting time breakdowns that separate local computation from interprocessor communication, we also see that our efficient algorithm prevents communication from bottlenecking the overall computation; our comparison with a naive approach shows that communication can easily dominate the running time of each iteration. 
%In the case of \ParNMF, the problems remain computation bound on up to 600 processors, typically spending most of the time in local matrix multiplication or NLS solves.

%We focus in this work on BPP, %which is more expensive per-iteration than alternative methods like MU and HALS, 
%because it has been shown to reduce overall running time in the sequential case by requiring fewer iterations \cite{kim2011fast}.
%Because much of the time per iteration of \ParNMF\ is spent on local NLS, we believe further empirical exploration is necessary 
%to understand the proposed \ParNMF{}'s advantages for other AU-NMF algorithms such as MU and HALS. 
%%to confirm the advantages of BPP in the parallel case.
%We note that if we use the MU or HALS approach for determining low rank factors, the relative cost of interprocessor communication will grow, making the communication efficiency of our algorithm more important.

In future work, we would like to extend \ParNMF\  algorithm to dense and sparse tensors, computing the CANDECOMP/PARAFAC decomposition in parallel with non-negativity constraints on the factor matrices.
We plan on extending our software to include more NMF algorithms that fit the AU-NMF framework; these can be used for both matrices and tensors. 
We would also like to explore more intelligent distributions of sparse matrices: while our 2D distribution is based on evenly dividing rows and columns, it does not necessarily load balance the nonzeros of the matrix, which can lead to load imbalance in matrix multiplications.
We are interested in using graph and hypergraph partitioning techniques to load balance the memory and computation while at the same time reducing communication costs as much as possible.
%Finally, based our survey, experiments and knowledge, we would like to conclude that, so far we are the fastest distributed NMF algorithm in the literature. 
%Finally, we have not yet reached the limits of the scalability of \ParNMF; we would like to expand our benchmarks to larger numbers of nodes on the same size data sets to study performance behavior when communication costs completely dominate the running time.

\acks
\small

This manuscript has been co-authored by UT-Battelle, LLC under Contract No. DE-AC05-00OR22725 with the U.S. Department of Energy.  This project was partially funded by the Laboratory Director's Research and Development fund. This research used resources of the Oak Ridge Leadership Computing Facility at the Oak Ridge National Laboratory, which is supported by the Office of Science of the U.S. Department of Energy.

%This research used resources of the National Energy Research Scientific Computing Center, a DOE Office of Science User Facility supported by the Office of Science of the U.S. Department of Energy under Contract No. DE-AC02-05CH11231.

Also, partial funding for this work was provided by AFOSR Grant FA9550-13-1-0100, National Science Foundation (NSF) grants IIS-1348152 and ACI-1338745, Defense Advanced Research Projects Agency (DARPA) XDATA program grant FA8750-12-2-0309.
%We also thank NSF for the travel grant to present this work in the conference through the grant CCF-1552229. 

The United States Government retains and the publisher, by accepting the article for publication, acknowledges that the United States Government retains a non-exclusive, paid-up, irrevocable, world-wide license to publish or reproduce the published form of this manuscript, or allow others to do so, for United States Government purposes. The Department of Energy will provide public access to these results of federally sponsored research in accordance with the DOE Public Access Plan (\url{http://energy.gov/downloads/doepublic-access-plan}). 

 Any opinions, findings and conclusions or recommendations expressed in this material are those of the authors and do not necessarily reflect the views of the USDOE, NERSC, AFOSR, NSF or DARPA.
 
 \normalsize

% We recommend abbrvnat bibliography style.
\vspace{-0.1in}

\bibliographystyle{ACM-Reference-Format-Journals}
\bibliography{paper}

% The bibliography should be embedded for final submission.

%\begin{thebibliography}{28}
%\providecommand{\natexlab}[1]{#1}
%\providecommand{\url}[1]{\texttt{#1}}
%\expandafter\ifx\csname urlstyle\endcsname\relax
%  \providecommand{\doi}[1]{doi: #1}\else
%  \providecommand{\doi}{doi: \begingroup \urlstyle{rm}\Url}\fi
%\end{thebibliography}

\end{document}

%                       Revision History
%                       -------- -------
%  Date         Person  Ver.    Change
%  ----         ------  ----    ------

%  2013.06.29   TU      0.1--4  comments on permission/copyright notices